\crefname{ineq}{inequality}{inequalities}
\tikzstyle{vertex}=[circle, fill, inner sep=0pt, minimum size=3pt]
\newcommand{\vertex}{\node[vertex]}
\tikzset{snake it/.style={decorate, decoration=snake}}
\tikzset{
dot/.style = {circle, fill, minimum size=#1,
              inner sep=0pt, outer sep=0pt},
dot/.default = 3pt % size of the circle diameter 
}
\theoremstyle{definition}
\newtheorem{definition}{Definition}
\theoremstyle{remark}
\theoremstyle{plain}
\newtheorem*{fact}{Fact}
\newtheorem*{conjecture}{Conjecture}
\declaretheorem[name=Theorem,numberwithin=section]{thm}
\newtheorem{lemma}[thm]{Lemma}
\newtheorem*{lemma*}{Lemma}
\newcommand*{\myproofname}{Proof}
\newcommand*{\proofsketchname}{Proof sketch}
\let\given\givenbase
\newcommand{\diag}{\operatorname{diag}}
\def\prob{\ensuremath\mathbb{P}}
\def\expect{\ensuremath\mathbb{E}}
\newcommand*\dd{\mathop{}\!\mathrm{d}}
\def\S{\ensuremath\mathcal{S}}
\def\G{\ensuremath\mathcal{G}}
\def\real{\ensuremath\mathbb{R}}
\def\Pol{{\sffamily P}}
\def\sharpP{{\sffamily \#P}}
\def\NP{{\sffamily NP}}
\def\FP{{\sffamily FP}}
\def\twoopt{{\textsc{\#2Opt}}}
\def\hampath{{\textsc{\#HamPath}}}
\let\originalleft\left
\let\originalright\right
\renewcommand{\left}{\mathopen{}\mathclose\bgroup\originalleft}
\renewcommand{\right}{\aftergroup\egroup\originalright}
\title{Counting Locally Optimal Tours in the TSP}
\author[1]{Bodo Manthey}
\author[1]{Jesse van Rhijn}
\affil[1]{Department of Applied Mathematics, University of Twente}
\begin{document}
\maketitle

\begin{abstract}
    We show that the problem of counting the number of 2-optimal tours in instances of
    the Travelling Salesperson Problem (TSP)
    on complete graphs is \sharpP{}-complete. In addition, we show that
    the expected number of 2-optimal tours in random instances
    of the TSP on complete graphs is~$O(1.2098^n \sqrt{n!})$.
    Based on numerical experiments, we conjecture that the true bound is
    at most~$O(\sqrt{n!})$, which is approximately the square root of the total
    number of tours.
\end{abstract}

\section{Introduction}

The Travelling Salesperson Problem is among the best-studied problems in computer science.
It can be stated compactly: given a weighted graph~$G = (V, E)$ with edge weights~$w: E \to \real$, find the Hamiltonian cycle (tour) on~$G$ with the smallest total weight.
The TSP is a classic example of a hard optimization problem, being
even among Karp's original 21 {\sffamily{NP}}-hard problems~\cite{karpReducibilityCombinatorialProblems1972}.

Owing to this hardness, practitioners often turn to approximate methods. One
extremely successful method is local search~\cite{linEffectiveHeuristicAlgorithm1973}. 
This is a general optimization framework
where one modifies an existing (sub-optimal) solution into a better solution.

The simplest local search heuristic for the TSP is 2-opt~\cite{aartsLocalSearchCombinatorial2003}. 
This heuristic takes as its input a tour~$T$,
and finds two sets of two edges each,~$\{e_1, e_2\} \subseteq T$ and~$\{f_1, f_2\}
\nsubseteq T$, such that exchanging~$\{e_1, e_2\}$ for~$\{f_1, f_2\}$
yields again a tour~$T'$, and the total weight of~$T'$ is strictly
less than the total weight of~$T$. This procedure is repeated with the new tour,
and stops once no such edges exist. The resulting tour is
said to be locally optimal with respect to the 2-opt neighborhood.

A convenient way to view 2-opt (and other local search heuristics) is via the transition graph~$\mathcal{T}$.
This directed graph contains a node for every tour of~$G$. An arc~$(T_1, T_2)$
exists in~$\mathcal{T}$ if and only if~$T_2$ can be obtained from~$T_1$
by a 2-opt step and~$T_2$ has strictly lower cost than~$T_1$. 
The sinks of~$\mathcal{T}$ are exactly the locally optimal tours of~$G$.
A run of 2-opt
can then be characterized by a directed path through~$\mathcal{T}$ ending in a sink.

Much research has previously focused on understanding the running time of
2-opt~\cite{chandraNewResultsOld1999,englertWorstCaseProbabilistic2014,englertSmoothedAnalysis2Opt2016,mantheyImprovedSmoothedAnalysis2023b,mantheySmoothedAnalysis2Opt2013}
and its approximation ratio~\cite{brodowskyApproximationRatio2Opt2021,engelsAveragecaseApproximationRatio2009,englertWorstCaseProbabilistic2014,hougardyApproximationRatio2Opt2020,kunnemannSmoothedAnalysis2Opt2023}. On the other
hand, little is known about the structure of~$\mathcal{T}$.
In this paper, we are
concerned with counting the number of sinks of~$\mathcal{T}$, which is equivalent
to counting the number of 2-optimal tours in the instance represented by~$\mathcal{T}$.

There are practical reasons to study the transition graphs of local search heuristics.
First, observe that the transition graph of 2-opt for the random TSP instances we
consider is a type of random directed acyclic graph. 
A run of
2-opt can be viewed as a path through this random DAG, so that the
the length of the longest path in~$\mathcal{T}$ is an upper bound for the number of iterations
2-opt can perform. This upper bound is however rather crude: If we consider
a run of 2-opt with a random initialization, then the probability that we start
the run on a node of the longest path is likely small. If most paths are much shorter,
then this can provide a better explanation for the practical running time of 2-opt
than only studying the longest path.

Structural results on the transition graph may in addition have implications
for the running time of metaheuristics. In particular, 2-opt is often used
as the basis of simulated annealing, a physics-inspired metaheuristic~\cite[Chapter 8]{aartsLocalSearchCombinatorial2003}.
It has long been known
that the structure of the transition graph strongly influences the running time of this algorithm.
Structural parameters of this graph often enter convergence results and running time estimates~\cite{hajekCoolingSchedulesOptimal1988,jerrumLargeCliquesElude1992,nolteNoteFiniteTime2000}.
A recent result by Chen et al.~\cite{chenAlmostLinearPlantedCliques2023} especially illustrates this
point. They showed that the Metropolis process (in essence, simulated annealing at a constant
temperature) is unable to find even very large planted cliques in an Erd\H{o}s-R\'enyi random graph.
Their analysis hinges on several structural results on cliques in such random graphs.

The result by Chen et al., as well as the preceding result by Jerrum~\cite{jerrumLargeCliquesElude1992},
deals with the purely discrete problem of finding cliques. However, simulated annealing is
often applied to \emph{weighted} problems~\cite{aartsSimulatedAnnealingBoltzmann1989},
which yields significant
challenges in understanding the transition graph.
We believe that understanding more about the structure of transition graphs
is key to proving rigorous results on simulated annealing for weighted problems.

\subsection*{Results}

We start by showing that the problem of counting 2-optimal tours is \sharpP-complete,
even on complete weighted graphs. Recall that \sharpP{} is the counting analogue to \NP{},
asking not \emph{whether} a solution exists but \emph{how many} solutions exist.
A formal definition of \sharpP{} is provided in \Cref{sec: counting preliminaries}. 

Our result is in fact slightly stronger. To state it in full, we need the notion of a \emph{path
cover}. A set of paths~$\mathcal{P}$ in a graph~$G$ is a path cover if every vertex of~$G$
is contained in exactly one path of~$\mathcal{P}$. We then have the following.

\begin{restatable}{thm}{countingpathcovers}\label{thm: counting path covers}
    Let~$f_{\text{2-opt}}$ be a function that maps a complete weighted graph on the vertex set~$V$ to
    the number of 2-optimal tours on this graph. Using~$|V|$ calls to~$f_\text{2-opt}$, we can
    compute the number of path covers of size~$\ell$ for each~$1\leq \ell \leq |V|$ in polynomial
    time, using~$f_\text{2-opt}$ as an oracle.
\end{restatable}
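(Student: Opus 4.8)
The plan is to give a polynomial-time Turing reduction to the problem of counting $2$-optimal tours from the problem of computing, for a given graph $G$ on $V$, the numbers $p_\ell$ of path covers of $G$ of each size $\ell$; the latter is already \sharpP-hard at $\ell=1$, where it is exactly counting Hamiltonian paths (cf.\ \hampath). Let $G$ be the input graph on $V$. For each $k\in\{1,\dots,|V|\}$ I will construct, in polynomial time, a weight function $w_k$ on the complete graph $K_V$, make one oracle call to obtain $N_k:=f_\text{2-opt}(K_V,w_k)$, and finally recover $(p_1,\dots,p_{|V|})$ by inverting an explicit linear system relating the vector $(N_k)_k$ to $(p_\ell)_\ell$. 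This uses exactly $|V|$ calls.

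The weight functions $w_k$ are ``lexicographic with a huge gap'': the edges of $G$ receive a small weight, a gadget whose purpose is to pin the number of paths down to $k$ (it designates $k$ of the vertices to act as path separators) receives an even smaller, effectively mandatory, weight, and every remaining edge receives a weight $M$ so large that inserting even one such edge outweighs everything else. The point of the gap is that a $2$-opt step that removes an $M$-edge always strictly improves, a step that inserts an $M$-edge never improves, and the remaining steps can be made non-improving by giving all non-$M$ edges the same weight (up to a negligible tie-break). Hence a tour is $2$-optimal if and only if it is a global optimum of $w_k$, and the global optima can be described explicitly: each arises from a path cover of $G$ of size $k$ by stitching its paths into a Hamiltonian cycle through the gadget. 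Counting the stitchings expresses $N_k$ as a known linear combination $\sum_\ell c_{k\ell}\,p_\ell$; running over all $k$ yields a system that I will argue is invertible (it becomes triangular, respectively Vandermonde-like, in a suitable ordering), which finishes the reduction.

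The step I expect to be the real obstacle is the characterization of $2$-optimal tours, namely proving that the constructed weights admit \emph{no} spurious $2$-opt local optima: every tour that is not a global optimum must be shown to possess an improving step. This is delicate because an improving step need not drop the number of $M$-edges by two --- it may drop it by one while simultaneously swapping in a $G$-edge --- so for an arbitrary suboptimal tour one has to exhibit two of its edges that are non-adjacent, occur in the correct cyclic order for the $2$-opt reconnection to produce exactly the desired new pair, and together certify a strict decrease in cost; carrying this out uniformly over all suboptimal tours is the heart of the proof. A secondary, more routine technicality is computing the stitching multiplicities exactly (they depend on, e.g., the number of single-vertex paths in the cover) and choosing the gadget and weight levels so that, after a preprocessing normalization of $G$ if needed, the linear system over $k$ is genuinely invertible.
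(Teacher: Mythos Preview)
Your outline differs from the paper's construction and, as written, carries a real gap. The paper does not work on the original vertex set: for each $m\in\{|V|+1,\dots,2|V|\}$ it adjoins a set $S$ of $m$ fresh vertices, assigns weight $0$ to the edges of $G$, a huge weight $M$ to the non-edges of $G$, and weights $N=2L$ inside $S$ and across $(V,S)$. The $2$-optimal tours of $G_m$ are then characterized not as global optima but simply as the tours containing no non-edge of $G$; this is an easy six-case check that works precisely because $|S|>|V|$ forces every tour to contain an $S$-edge, which can always be swapped with any non-edge for a strict gain of $M+N-2L>0$. Each such tour restricts on $V$ to a path cover of \emph{some} size $\ell$, and the number of ways to stitch a size-$\ell$ cover through $S$ is $c(\ell,m)=2^{\ell-1}m!(m-1)!/(m-\ell)!$. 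Varying $m$ (not a target cover size) yields $f_\text{2-opt}(G_m)=\sum_\ell c(\ell,m)\,a_\ell$, and the $|V|\times|V|$ coefficient matrix is shown invertible by a Wronskian argument.

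Your scheme has two unfilled holes. First, the gadget is never actually specified: ``designating $k$ of the vertices of $V$ as path separators'' by edge weights alone will naturally count covers whose break-points lie at \emph{those particular} vertices, not all covers of size $k$, so it is not clear why $N_k$ should be a linear form in the $p_\ell$ that is independent of this choice. Second, and more seriously, the assertion that $2$-optimal coincides with globally optimal is exactly the step you flag as the obstacle, and without auxiliary vertices there is no evident reason it holds. The paper's argument hinges on the guaranteed $S$-edge as a universal swap partner for any bad edge; on $V$ alone, a tour can be stuck carrying surplus $M$-edges because every $2$-change removing one may be forced by its endpoint pattern to insert another. Until you give a concrete gadget and actually prove (or, as the paper does, sidestep) the local-equals-global claim, the reduction is incomplete.
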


This result yields \sharpP{}-hardness of \twoopt{} on complete graphs as a corollary.
This counting problem asks the question:
Given a weighted graph~$G$, how many 2-optimal tours are there on~$G$?
Note that counting the number of Hamiltonian cycles is trivial on complete
graphs, whereas hardness of counting 2-optimal tours in the same setting is
not immediately obvious.

\begin{restatable}{thm}{sharppcomplete}\label{thm: sharp P complete}
    \twoopt{} is \sharpP{}-complete, even on complete graphs.
\end{restatable}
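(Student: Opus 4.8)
The plan is to establish the two halves of \sharpP{}-completeness, putting essentially all of the substance into \Cref{thm: counting path covers} and wrapping only a thin argument around it.

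First I would verify membership in \sharpP{}. Here I would exhibit the witness relation $R$ whose witnesses for a complete weighted graph $G$ are exactly the $2$-optimal tours of $G$. A tour admits an $O(|V|\log|V|)$-bit encoding, and given a tour $T$ together with the weights one can decide whether $T$ is $2$-optimal in time $O(|V|^2)$: enumerate the $\binom{|V|}{2}$ pairs of tour edges, form the unique alternative tour produced by the corresponding exchange, and check that none of them strictly decreases the weight. Thus $R$ is polynomial-time decidable with polynomial-length witnesses, and \twoopt{} is precisely the function counting the witnesses of $R$, so \twoopt{} lies in \sharpP{}.

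Second, for \sharpP{}-hardness on complete graphs I would invoke \Cref{thm: counting path covers}. That theorem supplies a polynomial-time procedure which, on input a graph on vertex set $V$ and with oracle access to $f_{\text{2-opt}}$, issues $|V|$ queries to $f_{\text{2-opt}}$ on complete weighted graphs and returns the numbers $p_1,\dots,p_{|V|}$, where $p_\ell$ is the number of path covers of the input graph of size $\ell$. In particular it computes $p_1$, which for $|V|\geq 2$ is exactly the number of Hamiltonian paths of the input graph. Since \hampath{} is \sharpP{}-complete, every \sharpP{} function is polynomial-time Turing-reducible to it; composing such a reduction with the reduction of \Cref{thm: counting path covers} produces a polynomial-time Turing reduction from every \sharpP{} function to \twoopt{} on complete graphs, since a composition of polynomial-time Turing reductions is again one. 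Hence \twoopt{} is \sharpP{}-hard on complete graphs, and together with the first part it is \sharpP{}-complete on complete graphs.

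The main obstacle sits entirely inside \Cref{thm: counting path covers}, not in this corollary: $f_{\text{2-opt}}$ is defined only on complete \emph{weighted} graphs, whereas the natural hard source problem \hampath{} (or more generally \pathcover{}) lives on sparse unweighted graphs, so one must choose weights on the complete graph so that its $2$-optimal tours encode the path covers of the original graph, and then separate the contributions of covers of different sizes — which is what the $|V|$ distinct oracle calls and the resulting triangular system of equations in the $p_\ell$ accomplish. Granting that theorem, the only remaining points to watch in the present statement are bookkeeping: that the chained reduction stays polynomial, and that \sharpP{}-hardness of endpoint-free Hamiltonian-path counting may be invoked as a black box; both are routine.
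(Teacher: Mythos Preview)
Your proposal is correct and follows essentially the same route as the paper: membership in \sharpP{} via polynomial-time verifiability of 2-optimality, and hardness by reading off the number of Hamiltonian paths as the $\ell=1$ entry of the path-cover vector produced by \Cref{thm: counting path covers}, then appealing to Valiant's \sharpP{}-completeness of \hampath{}. Your write-up simply adds a bit more bookkeeping detail (witness length, verification cost, closure of Turing reductions under composition) than the paper's terse version.
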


We note that the result remains true for metric TSP instances on complete
graphs, which can be seen to hold by adding a sufficiently large number to every edge weight of
the original instance.

While counting 2-optimal tours on complete graphs is thus likely intractable in general,
we may still wonder about the average case.
In the case where the edge weights are given by
independent uniformly distributed random variables,
we obtain the following upper bound on the number of 2-optimal tours.

\begin{thm}\label{thm:count_2opt}
    Let~$G$ be a complete graph on~$n =2^k+1$ vertices, with edge weights
    drawn independently from~$U[0,1]$ for each edge. Then the expected number of
    2-optimal tours on~$G$ is bounded from above by~$O\left(1.2098^n \sqrt{n!}\right)$.
\end{thm}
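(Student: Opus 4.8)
The plan is to bound the expected number of 2-optimal tours by a first-moment computation: write the expected count as $\sum_T \prob[T \text{ is 2-optimal}]$, where the sum ranges over all $(n-1)!/2$ tours, and by symmetry this equals $\frac{(n-1)!}{2} \prob[T_0 \text{ is 2-optimal}]$ for a fixed reference tour $T_0$. Since $\frac{(n-1)!}{2} \approx \sqrt{n!} \cdot \sqrt{(n-1)!/(2^2 \cdot \text{something})}$ is already of order $n! / n$, and the target bound is $O(1.2098^n \sqrt{n!})$, we need $\prob[T_0 \text{ is 2-optimal}]$ to be at most roughly $1.2098^n / \sqrt{n!} \approx (c/\sqrt{n})^n$ for a suitable constant. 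So the heart of the matter is a sufficiently strong upper bound on the probability that a fixed tour survives all 2-opt moves.

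First I would set up the event that $T_0$ is 2-optimal as the intersection over all $\binom{n}{2}$-ish 2-opt moves of the event that that move does not improve. Each 2-opt move removes two edges $\{e_1,e_2\}$ of $T_0$ and inserts two edges $\{f_1,f_2\}$; the move is non-improving iff $w(f_1)+w(f_2) \geq w(e_1)+w(e_2)$. These events are highly dependent because the $n$ tour-edge weights appear in many of them simultaneously, so a naive union/independence argument is too lossy. The standard trick (as in smoothed-analysis arguments for 2-opt) is to condition on the $n$ weights of the edges of $T_0$, or better, to select a large set of 2-opt moves that are "independent" in the sense that the non-tour edges $f_1,f_2$ they insert are all distinct across the chosen moves. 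Then, conditioned on the tour-edge weights, the events "move $i$ is non-improving" become independent, and each has probability a function of the relevant tour-edge-weight gap. One then integrates over the tour-edge weights. Concretely, if one picks a perfect-matching-like family of $\Theta(n)$ disjoint moves, one gets $\prob[T_0 \text{ 2-opt}] \leq \expect_{w(T_0)}\bigl[\prod_{i} \prob[f_1^{(i)}+f_2^{(i)} \geq \Delta_i \mid w(T_0)]\bigr]$, and each factor is at most something like $1 - c\Delta_i$ or a quadratic in $\Delta_i$ depending on the density, which after integration yields a bound of the form $(C/n)^{\Theta(n)}$.

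The refinement needed to hit the constant $1.2098$ rather than a weaker constant is presumably to use more than a linear number of moves, or to partition the tour into blocks and handle all moves within each block exactly. The cleanest route I would attempt: partition the $n = 2^k+1$ edges of $T_0$ into nested dyadic blocks (this is surely why the hypothesis $n = 2^k+1$ appears), and within each block of size $2m$ count the 2-opt moves supported there; recursively bound the probability that no such move improves, using the independence across disjoint blocks at each level of the recursion. This gives a product over $\log n$ levels, each contributing a factor, and the recursion solves to the claimed bound. The constant $1.2098$ would then emerge as the value of some explicit optimization — e.g. $\max_{x} (\text{something})^{1/x}$ or the solution of a small transcendental equation — coming from balancing the per-block probability against the number of blocks.

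The main obstacle I anticipate is the dependency structure: getting a bound that is tight enough to yield a constant base strictly below $\sqrt{2} \approx 1.414$ (which is what the trivial "half the tours" heuristic suggests as a barrier, matching the conjectured $O(\sqrt{n!})$) requires exploiting many moves simultaneously, not just a single disjoint family, and controlling the correlations introduced by shared tour edges. Handling the integration over the $n$ tour-edge weights while keeping track of which gaps $\Delta_i$ are positive — and doing so inside a recursive/dyadic scheme — is where the real work lies; everything else (reducing to a fixed tour by symmetry, Stirling estimates to convert $(n-1)!/2$ into $\sqrt{n!}$ times an exponential, and the final constant optimization) is routine bookkeeping.
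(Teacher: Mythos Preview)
Your high-level strategy matches the paper exactly: reduce by symmetry to a fixed tour, condition on the $n$ tour-edge weights, choose a large family of \emph{chord-disjoint} 2-changes so that the non-improvement events become conditionally independent, then integrate out the tour weights. The dyadic staging you guess at (and the reason for $n=2^k+1$) is also precisely what the paper uses.

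The gap is in the step you summarize as ``the recursion solves to the claimed bound.'' The paper does \emph{not} recurse over blocks with independence across disjoint blocks; it builds a single global set $\S$ of chord-disjoint 2-changes across $\log(n-1)-1$ stages and then tracks, for each tour-edge $e$, the number $k_e(\S)$ of 2-changes in $\S$ that use $e$. The combinatorial punchline of the construction is that the multiset $\{k_e\}_{e\in T}$ is exactly $\{0,1,2,\ldots,n-3\}$ (with one repeat), so that $\prod_{k_e>0} k_e \asymp (n-2)!$. Meanwhile, the conditional probability that a single 2-change with tour-weights $s_i,s_j$ is non-improving is bounded by $e^{-(s_i+s_j)^2/2}$; collecting these Gaussian factors and integrating gives
\[
\prob[T\text{ is 2-optimal}] \;\le\; \G(\S)\prod_{k_e>0}\sqrt{\frac{\pi}{2k_e}},
\]
and it is the factorial product $\prod k_e$, not any level-by-level recursion, that produces $1/\sqrt{(n-2)!}$. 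Your ``$(C/n)^{\Theta(n)}$ from a $\Theta(n)$-move matching'' intuition is exactly the failure mode here: if all $k_e$ were roughly equal you would get $c^n/n^{n/2}$, not $c^n/\sqrt{n!}$; the spread of the $k_e$ over $\{1,\ldots,n-3\}$ is the missing idea. Finally, the constant $1.2098$ is not the output of an optimization over block sizes: it is $\sqrt{\pi/2}\cdot e^{-1/(9\pi)}$, where $\sqrt{\pi/2}$ is the per-edge half-normal normalization and $e^{-1/(9\pi)}$ comes from bounding the residual cross-term $\G(\S)$ via a multivariate-normal positive-orthant probability estimate (\Cref{thm: orthant expectation} in the paper).
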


In the process of proving \Cref{thm:count_2opt} we obtain a link between
the number of 2-optimal tours and the probability
that all entries of a multivariate normal random vector are positive.
This quantity is also known as the positive orthant probability. To estimate
this probability we prove the following theorem.

\begin{restatable}{thm}{orthantexpectation}\label{thm: orthant expectation}
    Let~$X$ be a multivariate normal vector with zero mean and covariance matrix~$\Sigma$. The positive orthant probability~$\prob(\real_+^d) = \prob(X \in \real_+^d)$ satisfies
    \begin{align*}
        \prob(\real_+^d) \leq
            \frac{\exp\left(\frac{1}{2}\sum_{i=1}^d \Sigma_{ii}^{-1} 
                \left(\expect\left[X_i^2 \,\Big\lvert\, \real_+^d\right]\right)\right)}
                {2^{d-1}e^{d/2}}.
    \end{align*}
    In particular, if the diagonal elements of~$\Sigma^{-1}$ are each~$\Sigma_{ii}^{-1} = 1$,
    then 
    \[
        \prob(\real_+^d) \leq 2^{-d + 1} e^{-d/2} 
        \exp\left(\frac{1}{2}\expect\left[\|X\|_2^2 \,\Big\lvert\, \real_+^d\right]\right).
    \]
\end{restatable}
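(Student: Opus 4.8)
The plan is to express the positive orthant probability as an expectation of the Gaussian density, conditioned on the positive orthant, and then bound that density using the entropy or the arithmetic–geometric mean inequality.

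The plan is to bound $p:=\prob(X\in\real_+^d)$ via an entropy comparison on the law $\mu$ of $X$ conditioned on $\{X\in\real_+^d\}$. Write $\phi$ for the $N(0,\Sigma)$ density (we may assume $\Sigma$ nonsingular, as $\Sigma^{-1}$ appears in the statement); then the conditional density is $\phi/p$ on $\real_+^d$, so $\log\mu(x)=\log\phi(x)-\log p$ there, and integrating this identity against $\mu$ gives
\[
\log p = h(\mu) + \expect_\mu[\log\phi(X)],
\]
where $h$ denotes differential entropy and every term is finite because $\mu$ is a truncated Gaussian. Expanding the Gaussian log-density and writing $m_i:=\expect[X_i^2\mid\real_+^d]$, the second term equals $-\tfrac d2\log(2\pi)-\tfrac12\log\det\Sigma-\tfrac12\expect_\mu[X^{\top}\Sigma^{-1}X]$.

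The first main step is to show $\expect_\mu[X^{\top}\Sigma^{-1}X]=d$ exactly. Setting $Z=\Sigma^{-1/2}X\sim N(0,I_d)$, we have $X^{\top}\Sigma^{-1}X=\|Z\|_2^2$, and $\{X\in\real_+^d\}$ becomes $\{Z\in C\}$ for the cone $C=\Sigma^{-1/2}\real_+^d$. Because $C$ is a cone, membership in it depends only on the direction $Z/\|Z\|_2$, which for a standard Gaussian is independent of the radius $\|Z\|_2$; hence $\|Z\|_2^2$ is independent of $\mathbbm{1}_{Z\in C}$, so $\expect_\mu[\|Z\|_2^2]=\expect[\|Z\|_2^2]=d$. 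This is the observation that ultimately produces the factor $e^{-d/2}$.

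The second step is to bound $h(\mu)$. By nonnegativity of relative entropy, $h(\mu)\le-\expect_\mu[\log g(X)]$ for any probability density $g$ that is positive on $\real_+^d$; I would take $g=\prod_i g_i$ with $g_i$ the half-normal density of second moment $1/(\Sigma^{-1})_{ii}$, so that
\[
-\expect_\mu[\log g(X)] = \frac d2\log\frac\pi2 - \frac12\sum_{i=1}^d\log(\Sigma^{-1})_{ii} + \frac12\sum_{i=1}^d(\Sigma^{-1})_{ii}\,m_i.
\]
Substituting this and the value $d$ from the previous step into the identity for $\log p$, the pure constants collapse to $-d\log 2-\tfrac d2$, while the two determinant contributions combine into $\tfrac12\log\bigl(\det(\Sigma^{-1})/\prod_i(\Sigma^{-1})_{ii}\bigr)\le 0$ by Hadamard's inequality for the positive definite matrix $\Sigma^{-1}$. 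This gives $\log p\le\tfrac12\sum_i(\Sigma^{-1})_{ii}m_i-d\log 2-\tfrac d2$, equivalently $p\le 2^{-d}e^{-d/2}\exp\bigl(\tfrac12\sum_i(\Sigma^{-1})_{ii}m_i\bigr)$, which is in fact marginally stronger than the claimed bound. The second, normalized inequality then follows at once by setting $(\Sigma^{-1})_{ii}=1$ and observing $\sum_i m_i=\expect[\|X\|_2^2\mid\real_+^d]$.

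The step I expect to be the crux is the exact evaluation $\expect_\mu[X^{\top}\Sigma^{-1}X]=d$: without the observation that the radius of a standard Gaussian is independent of membership in any cone, one can only say $-\tfrac12\expect_\mu[X^{\top}\Sigma^{-1}X]\le 0$, which discards the $e^{-d/2}$ factor entirely and leaves a bound too weak for the intended application. The remaining ingredients — the entropy identity together with its integrability and absolute-continuity justifications, the choice of $g$ as a product of half-normals, and Hadamard's inequality — are standard and should go through routinely.
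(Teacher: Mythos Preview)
Your argument is correct and takes a genuinely different route from the paper. The paper proceeds by decomposing $1=\sum_s\prob(\real_s^d)$ over all $2^d$ orthants, rewriting each $\prob(\real_s^d)$ as a conditional expectation over $\real_+^d$, and then applying Jensen's inequality twice (once over the Gaussian, once over a uniformly random sign string $s$); a small combinatorial lemma about sign strings and Amemiya's identity $\sum_j\Sigma^{-1}_{ij}\expect[Z_iZ_j\mid\real_+^d]=1$ finish the job. Your proof replaces this machinery by the single information-theoretic identity $\log p = h(\mu)+\expect_\mu[\log\phi]$, a cross-entropy bound against a product of half-normals, and Hadamard's inequality on $\Sigma^{-1}$. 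Your cone observation---that $\|\Sigma^{-1/2}X\|$ is independent of the event $\{X\in\real_+^d\}$ because the latter is a cone in the whitened coordinates---yields $\expect_\mu[X^\top\Sigma^{-1}X]=d$ directly and is exactly the summed form of Amemiya's theorem, so you recover the same key identity by a shorter and more conceptual argument. The net effect is a cleaner proof that avoids the orthant bookkeeping and the sign-string lemma, and it even tightens the constant from $2^{-d+1}$ to $2^{-d}$; nothing is lost, since the paper's subsequent application only needs the weaker bound.
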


\Cref{thm: orthant expectation} makes no assumptions on the covariance matrix~$\Sigma$, and
thus holds for any set of zero-mean multivariate normal variables.
Hence, it may be of independent interest in applications
where bounds on positive orthant probabilities are necessary.

\section{Preliminaries}\label{sec: counting preliminaries}

\subsection{Notation and Definitions}

We start with some notational shorthand. Given a symbol~$a$, we write~$a^m$ for the string consisting of~$m$ copies of~$a$.
Throughout,~$\log(\cdot)$ denotes the logarithm to base 2. 
We denote the positive orthant of~$\real^d$
by~$\real_+^d = \{x \in \real^d \given x_i > 0, \, i \in [d]\}$.
The negative orthant~$\real_-^d$ is defined similarly.

Let~$G = (V(G), E(G))$ be a simple graph.
For~$T$ a tour through~$G$, we call the edges of~$T$ the \emph{tour-edges} and the edges of~$E(G) \setminus T$
the \emph{chord-edges}. A 2-change on~$G$ then removes two tour-edges from~$T$
and adds two chord-edges to~$T$.

For a fixed tour, if two tour-edges are removed then there is only one
choice of chord-edges that yields a new tour. Thus, we can characterize a 2-change
fully by the tour-edges it removes.
If a 2-change removes the tour-edges~$e$ and~$f$ we denote this 2-change by~$S_T(e, f)$,
omitting the subscript~$T$ when the tour is clear from the context.
We say that two 2-changes on~$T$
are \emph{chord-disjoint} if they have no chord-edges in common. 

Given a set~$\S$ of~$2$-changes, we define~$P(\S) = \{\{e, f\} \mid S_T(e, f) \in \S\}$
as the set of pairs of tour-edges that participate in the 2-changes in~$\S$.
For~$e \in T$, we define~$k_e(\S) = |\{p \in P(\S) \mid e \in p\}|$, the number of 2-changes
in~$\S$ in which~$e$ participates.
For each of these quantities, we may omit the argument~$\S$ whenever
the set meant is clear from context.

\subsubsection*{Counting Complexity}

For our complexity results, we state the definitions of \sharpP{} and \sharpP{}-completeness
taken verbatim
from Arora and Barak~\cite{aroraComputationalComplexityModern2009}. 

\begin{definition}
    A function~$f: \{0, 1\}^* \to \mathbb{N}$ is in \sharpP{} if there exists a polynomial~$p: \mathbb{N} \to \mathbb{N}$ and a polynomial-time Turing machine~$M$ such that for every~$x \in \{0, 1\}^*$,
    \[
        f(x) = \left|\left\{ y \in \{0, 1\}^{p(x)} \given M(x, y) = 1\right\}\right|.
    \]
\end{definition}

For completeness, we need to recall the complexity class \FP{}, which
is the functional analogue to \Pol{}. This means that \FP{} is the set of
functions~$f : \{0, 1\}^* \to \{0, 1\}^*$ computable by a polynomial-time deterministic 
Turing machine. Moreover, we denote by~$\text{\FP{}}^f$ the set of
such functions where the Turing machine additionally has access to an oracle for~$f$.

\begin{definition}
    A function~$f$ is \sharpP{}-complete if it belongs to \sharpP{} and every~$g \in \text{\sharpP{}}$ is in~$\text{\FP{}}^f$.
\end{definition}

By this definition, if we can solve some \sharpP{}-complete problem in polynomial time,
then we can solve every problem in \sharpP{} in polynomial time.

\subsection{Multivariate Normal Distribution}

We require some basic facts about multivariate normal distributions. Let~$\mu \in \real^d$ and let~$\Sigma \in \real^{d \times d}$ be symmetric and positive definite. The
multivariate normal distribution~$\mathcal{N}(\mu, \Sigma)$ is defined by the
probability density function
\begin{align}\label{eq: multivariate normal distribution}
    f(x) = \frac{\exp\left(-\frac{1}{2}(x - \mu)^T \Sigma^{-1} (x - \mu)\right)}{(2\pi)^{d/2}\sqrt{\det \Sigma}}
\end{align}
with support~$\real^d$.

Let~$X \sim \mathcal{N}_d(\mu, \Sigma)$. The \emph{positive orthant probability}
of~$X$ is the probability that~$X$ falls in the positive
orthant~$\real_+^d$. When~$\mu = 0$ this quantity
is also referred to simply as the \emph{orthant probability}, without specifiying which orthant,
as each orthant is related by flipping the sign of a set of coordinates.

Orthant probabilities are closely related to the \emph{truncated multivariate normal distribution}.
Let~$a, b \in \real^d$ with~$a_i < b_i$ for each~$i \in [d]$. By abuse of notation
we write~$[a, b]^d = \{x \in \real^d \given a_i \leq x_i \leq b_i, \, i \in [d]\}$.
The multivariate normal distribution
truncated from below by~$a$ and from above by~$b$ is given by
\begin{align}\label{eq: truncated normal}
    f(x; a, b) = \begin{cases}
            \frac{1}{\prob\left(X \in [a, b]^d\right)}
             \cdot \frac{\exp\left(-\frac{1}{2}(x - \mu)^T \Sigma^{-1} (x - \mu)\right)}{(2\pi)^{d/2}\sqrt{\det \Sigma}}, &
                    \text{if~$x \in [a, b]^d$}, \\
            0, & \text{otherwise}.
    \end{cases}
\end{align}
A special case of the truncated normal distribution is the \emph{half-normal distribution}, which
is obtained by setting~$a_i = 0$ and~$b_i = \infty$ for all~$i$, and taking~$\mu = 0$ and~$\Sigma = \diag(\sigma_1^2, \ldots, \sigma_d^2)$.

The moments of the truncated multivariate normal distribution are significantly harder to compute than
those for the non-truncated distribution. Nevertheless, there are some elegant results in the literature.
We need the following result due to Amemiya~\cite{amemiyaMultivariateRegressionSimultaneous1974}.

\begin{thm}[\mbox{\cite[Theorem 1]{amemiyaMultivariateRegressionSimultaneous1974}}]\label{thm: amemiya}
    Let~$X$ be distributed according to a~$d$-dimensional truncated multivariate normal distribution
    with zero mean and covariance matrix~$\Sigma$, with each variable truncated only from
    below at zero. Then for each~$i \in [d]$,
    \[
        \sum_{j=1}^d \Sigma^{-1}_{ij}\expect[X_i X_j] = 1.
    \]
\end{thm}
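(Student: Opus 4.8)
The plan is to prove the identity by a Gaussian integration-by-parts argument, exploiting the fact that the score function of a centred normal density is linear in~$x$. Write~$A = \Sigma^{-1}$ for the precision matrix and let~$\phi(x) = (2\pi)^{-d/2}(\det\Sigma)^{-1/2}\exp(-\tfrac12 x^T A x)$ be the untruncated density on~$\real^d$, so that the truncated density is~$f(x) = \phi(x)/P$ on~$\real_+^d$ and zero elsewhere, where~$P = \prob(X \in \real_+^d)$ is the normalising orthant probability. The first observation is that the quantity to be controlled is a single expectation: since~$\sum_{j=1}^d \Sigma^{-1}_{ij} X_i X_j = X_i (A X)_i$, the left-hand side of the claim equals~$\expect[X_i (A X)_i]$, with the expectation taken under~$f$.

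The crucial second observation is the score identity~$(A x)_i\,\phi(x) = -\partial_i \phi(x)$, which follows by differentiating the quadratic exponent (using that~$A$ is symmetric). Substituting this, I would rewrite
\[
    \expect[X_i (A X)_i] = \frac{1}{P}\int_{\real_+^d} x_i (A x)_i\,\phi(x)\dd x = -\frac{1}{P}\int_{\real_+^d} x_i\,\partial_i\phi(x)\dd x.
\]
Integrating by parts in the~$x_i$ variable alone, holding the remaining coordinates fixed and positive, the inner one-dimensional integral yields a boundary contribution~$[x_i\phi]_{x_i=0}^{x_i=\infty}$ together with~$-\int_0^\infty \phi\,\dd x_i$. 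Both boundary terms vanish: at~$x_i=0$ the prefactor~$x_i$ annihilates it, and at~$x_i=\infty$ the Gaussian decay of~$\phi$ beats the linear growth of~$x_i$. Here it is essential that~$\Sigma$, hence~$A$, is positive definite, so that~$\phi$ decays in every direction and the boundary term at infinity is genuinely zero.

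Discarding the boundary terms, what remains is
\[
    -\frac{1}{P}\int_{\real_+^d} x_i\,\partial_i\phi(x)\dd x = \frac{1}{P}\int_{\real_+^d}\phi(x)\dd x = \frac{P}{P}=1,
\]
which is exactly the claimed identity. The argument holds for each~$i\in[d]$ separately, since the integration by parts is carried out in a single coordinate. I expect the main (and essentially only) obstacle to be the rigorous justification of this step: one must verify that~$x_i\partial_i\phi$ is integrable on~$\real_+^d$ and that the boundary evaluation at infinity vanishes uniformly in the remaining coordinates, which permits the application of Fubini's theorem to reduce to the one-dimensional computation. Both facts follow from the positive definiteness of~$A$, which makes~$\phi$ and~$x_i\phi$ rapidly decreasing on the closed orthant. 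The elegance of the statement is that the truncation boundary~$\{x_i = 0\}$ contributes nothing, being annihilated by the factor~$x_i$ in~$X_i(AX)_i$; the truncated computation therefore collapses to the same value~$(\Sigma^{-1}\Sigma)_{ii}=1$ that the identity would yield on all of~$\real^d$.
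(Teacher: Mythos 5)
Your proof is correct. Note that the paper does not prove this statement at all: it is imported verbatim from Amemiya's 1974 paper as a cited black box, so there is no in-paper proof to compare against; your Gaussian integration-by-parts (Stein-type) argument is the standard derivation and in substance the one in the cited reference. All steps check out: writing the left-hand side as $\expect\left[X_i(\Sigma^{-1}X)_i\right]$, the score identity $\partial_i \phi(x) = -(\Sigma^{-1}x)_i\,\phi(x)$ uses only the symmetry of $\Sigma^{-1}$; the factor $x_i$ annihilates the boundary term at the truncation face $\{x_i = 0\}$ (the only place where truncation could have interfered), positive definiteness of $\Sigma^{-1}$ gives $\Sigma^{-1}_{ii} > 0$ and hence kills the boundary term at infinity and justifies Fubini, and the surviving term $\frac{1}{P}\int_{\real_+^d}\phi(x)\dd x = 1$ yields the claim for each~$i$ separately. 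Your closing observation that the untruncated computation gives the same value $(\Sigma^{-1}\Sigma)_{ii} = 1$ is a correct and useful sanity check.
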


There are also explicit formulae for the second-order moments of this distribution. We use a formula
by Manjunath and Wilhelm~\cite{b.g.MomentsCalculationDouble2009}, adapted to our purposes.

\begin{thm}[\mbox{\cite[adapted from Equation (16)]{b.g.MomentsCalculationDouble2009}}]\label{thm: manjunath and wilhelm}
    Let~$X$ be distributed according to a~$d$-dimensional truncated multivariate normal distribution
    with zero mean and covariance matrix with entries~$\sigma_{ij}$, with each variable truncated 
    only from below at zero. Then for each~$i \in [d]$,
    \[
        \expect[X_i^2] = \sigma_{ii} + \sum_{k=1}^d \sum_{q \neq k} \sigma_{ik}
            \left(
                \sigma_{iq} - \frac{\sigma_{kq}\sigma_{ik}}{\sigma_{kk}}
            \right)F_{kq}(0, 0),
    \]
    where~$F_{ij}$ denotes the joint marginal distribution of~$(X_i, X_j)$.
\end{thm}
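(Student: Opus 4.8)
The plan is to obtain this identity as a specialization of the general second-moment formula of Manjunath and Wilhelm for the \emph{doubly} truncated multivariate normal distribution: we start from their box-truncated formula and let the upper truncation points tend to infinity while the lower ones are fixed at zero. So the first step is to write their Equation~(16) out in full for a zero-mean $\mathcal N(0,\Sigma)$ vector truncated to a box $[a,b]^d$: it expresses $\expect[X_i^2]$ as $\sigma_{ii}$, plus a sum of one-dimensional correction terms built from the quantities $a_k F_k(a_k)$ and $b_k F_k(b_k)$ (with $F_k$ the marginal density of $X_k$), plus a sum of two-dimensional correction terms, each of the form $\sigma_{ik}\bigl(\sigma_{iq}-\sigma_{kq}\sigma_{ik}/\sigma_{kk}\bigr)$ multiplied by the four-corner expression $F_{kq}(a_k,a_q)-F_{kq}(a_k,b_q)-F_{kq}(b_k,a_q)+F_{kq}(b_k,b_q)$ of the bivariate marginal of $(X_k,X_q)$, everything normalized by $\prob(X\in[a,b]^d)$.

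Next I would set $a=0$, take $b_i\to\infty$ for every $i$, and read off the survivors. Every term carrying a factor $a_k$ dies because $a_k=0$; every term of the form $b_kF_k(b_k)$, and every corner evaluation $F_{kq}(\cdot,b_q)$ or $F_{kq}(b_k,\cdot)$, vanishes in the limit because a Gaussian marginal density decays faster than any polynomial, so it kills the linear factor $b_k$ and sends the corner evaluations at infinity to $0$; and $\prob(X\in[a,b]^d)\to\prob(X\in\real_+^d)>0$, which is exactly the normalizing constant of the half-space-truncated law in the statement. What is left is $\sigma_{ii}$ together with the bivariate corrections evaluated at $(a_k,a_q)=(0,0)$, i.e.\ precisely $\sigma_{ii}+\sum_{k}\sum_{q\neq k}\sigma_{ik}\bigl(\sigma_{iq}-\sigma_{kq}\sigma_{ik}/\sigma_{kk}\bigr)F_{kq}(0,0)$.

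The obstacle is clerical rather than conceptual: one has to match conventions with the cited paper — Manjunath and Wilhelm phrase parts of the formula through the correlation rather than the covariance matrix and index the correction sums in their own way — and one has to justify the finitely many interchanges of limit and sum, which needs only the elementary facts that $t\,\varphi(t)\to0$ as $t\to\infty$ for the univariate Gaussian density $\varphi$ and that $\prob(X\in[a,b]^d)$ stays bounded away from $0$ in the limit. If instead a self-contained derivation were wanted, one could reprove the one-sided case from scratch: write $\prob(\real_+^d)\,\expect[X_iX_j\mid\real_+^d]=\int_{\real_+^d}x_ix_j\,g(x)\,\dd x$ for the $\mathcal N(0,\Sigma)$ density $g$, use the identity $\partial_{x_i}g=-(\Sigma^{-1}x)_i\,g$ to integrate by parts once in $x_i$, and then integrate by parts a second time so that the boundary contribution becomes integrals over the faces $\{x_k=0\}$, which reproduce exactly the bivariate marginal terms $F_{kq}(0,0)$; but routing through the published formula is shorter, and \Cref{thm: orthant expectation} needs nothing more.
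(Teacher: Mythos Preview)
The paper does not prove this statement at all: it is quoted as a known result, explicitly labelled ``adapted from Equation~(16)'' of Manjunath and Wilhelm, and used as a black box. Your proposal---specialize their doubly-truncated second-moment formula by setting the lower truncation points to zero and sending the upper ones to infinity, then observe that the Gaussian tail decay kills every term involving a $b_k$---is exactly what ``adapted from'' is signalling, and is correct.
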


\section{Complexity of Counting 2-Optimal Tours}

We now proceed to show hardness of counting 2-optimal tours on complete graphs.
We start by recalling the notion of a \emph{path cover}.
Given a simple graph~$G = (V, E)$, a path cover~$\mathcal{P}$ of~$G$ is a collection of
vertex-disjoint paths such that every~$v \in V$ belongs to exactly
one path of~$\mathcal{P}$. The \emph{size} of~$\mathcal{P}$ is the number
of paths in~$\mathcal{P}$. Note that a path cover of size 1 is equivalent to a Hamiltonian
path, and that a path cover may contain paths that consist of a single vertex.

From an instance~$G = (V, E)$ of \hampath{} we construct a family of instances~$G_m = (V \cup S, E \cup F)$
of \twoopt{} for~$m \geq |V| + 1$.
First, we add a new set~$S$ of vertices to~$G'$, where~$|S| = m \geq |V| + 1$.
To ensure that the reduction is polynomial-time computable, we also require~$m \leq 2|V|$; the reason for this choice will be apparent shortly.
The edges of~$G_m$ are the edges of~$G$, which are assigned a weight of 0
plus the set of edges~$F$, which contains:
\begin{itemize}
    \item all missing edges between the vertices of~$V$, with weight~$M$, which we call non-edges;
    \item all edges between the vertices of~$S$, with weight~$N$, which we call~$S$-edges;
    \item all edges between the vertices of~$V$ and the vertices of~$S$,
        with weight~$L$, which we call~$(V, S)$-edges.
\end{itemize}
The relationship between the edge weights is as follows: we set~$M \gg N = 2L$.
The precise values of these numbers are not important.
See \Cref{fig:reduction} for a schematic depiction of the reduction.

By this construction,~$G_m$ is a complete graph on~$|V| + m$ vertices.
We claim that by computing the number~$f_\text{2-opt}(G_m)$ of 2-optimal tours on~$G_m$
for~$m \in \{|V|+1,\ldots,2|V|\}$ we can
compute the number of path covers of~$G$ of size~$1 \leq \ell \leq |V|$.
To that end, we first characterize the 2-optimal tours on~$G_m$.

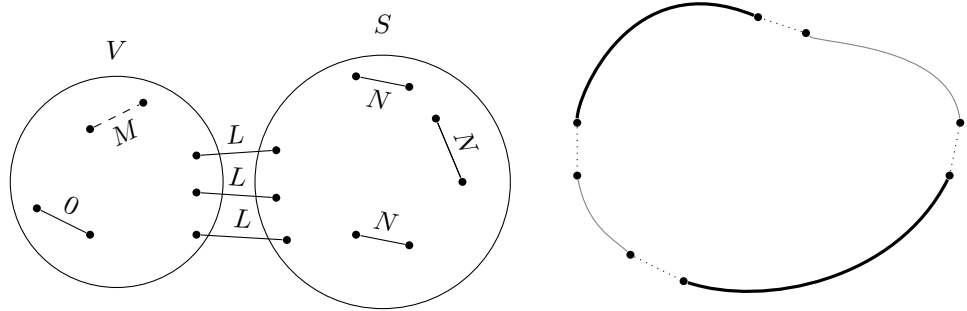
\begin{figure}
    \centering
    \subfigure[The reduction we use to prove \sharpP{}-hardness of
    \twoopt{}. The set~$V$ represents the vertices of the original graph, and~$S$ is 
    the set of~$m$ vertices added in the reduction. Each depicted edge is labelled with its
    weight. Note that in~$G_m$, the non-edges of~$G$ are added as well (represented here by the dashed edge).]{
    \begin{tikzpicture}[scale=0.7]
        \draw (0, 0) circle (2);
        \draw (5, 0) circle (2.4);

        \node at (0, 2.5) {$V$};
        \node at (5, 3) {$S$};

        \vertex (v1) at (-0.5, -1) {};
        \vertex (v2) at (-1.5, -0.5) {};
        \draw (v1) -- node[left,above,sloped] {$0$} ++ (v2);
        
        \vertex (v3) at (-0.5, 1) {};
        \vertex (v4) at (.5, 1.5) {};
        \draw[dashed] (v3) -- node[left,below,sloped] {$M$} ++ (v4);

        \vertex (v5) at (1.5, 0.5) {}; 
        \vertex (v6) at (3, 0.6) {}; 
        \draw (v5) -- node[left,above,sloped] {$L$} ++ (v6);
        \vertex (v5) at (1.5, -0.2) {}; 
        \vertex (v6) at (3, -0.3) {}; 
        \draw (v5) -- node[left,above,sloped] {$L$} ++ (v6);
        \vertex (v5) at (1.5, -1) {}; 
        \vertex (v6) at (3.2, -1.1) {}; 
        \draw (v5) -- node[left,above,sloped] {$L$} ++ (v6);

        \vertex (v7) at (4.5, -1.0) {};
        \vertex (v8) at (5.5, -1.2) {};
        \draw (v7) -- node[left,above,sloped] {$N$} ++ (v8);

        \vertex (v9) at (4.5, 2) {};
        \vertex (v10) at (5.5, 1.8) {};
        \draw (v9) -- node[left,below,sloped] {$N$} ++ (v10);

        \vertex (v11) at (6, 1.2) {};
        \vertex (v12) at (6.5, 0) {};
        \draw (v11) -- (v12);
        \draw (v11) -- node[right,above,sloped] {$N$} ++ (v12);
    \end{tikzpicture}
    }
    \hspace{1em}
    \subfigure[A 2-optimal tour through~$G_m$ consisting of two segments. Solid black
    curves represent paths through~$V$, solid gray lines are paths through~$S$, and dotted lines
    are~$(V, S)$-edges.]{
    \begin{tikzpicture}[scale=0.7]
        \vertex (v1) at (0, 0) {};
        \vertex (v2) at (5, 2) {};
        \draw[very thick] (v1) .. controls (1.5, -0.5) and (4, 0) .. (v2);
        \vertex (v3) at (5.2, 3) {};
        \draw[dotted] (v2) -- (v3);

        \vertex (v4) at (2.3, 4.7) {};
        \draw[gray] (v3) .. controls (5., 4.5) and (2.5, 4.5) .. (v4);

        \vertex (v5) at (1.4, 5) {};
        \draw[dotted] (v4) -- (v5);

        \vertex (v6) at (-2, 3) {};
        \draw[very thick] (v5) .. controls (-1, 6) and (-2, 3.5) .. (v6);

        \vertex (v7) at (-2, 2) {};
        \draw[dotted] (v6) -- (v7);

        \vertex (v8) at (-1, 0.5) {};
        \draw[gray] (v7) .. controls (-1.8, 1) and (-1.3, 0.8) .. (v8);
        \draw[dotted] (v8) -- (v1);

    \end{tikzpicture}
    }
    \caption{Schematic depiction of the reduction we use to prove \sharpP{}-hardness of \twoopt{},
    and of a 2-optimal tour in the image instance.}
    \label{fig:reduction}
\end{figure}

\begin{lemma}\label{lemma:2opt non-edges}
    A tour~$T$ through~$G_m$ is 2-optimal if and only if it contains no non-edges.
\end{lemma}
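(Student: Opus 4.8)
The plan is to prove the two implications separately. The ``if'' direction rests on the observation that every tour without non-edges has the same total weight, and the ``only if'' direction on a short counting argument showing that an expensive non-edge can always be eliminated by a single 2-change.

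For the ``if'' direction, suppose $T$ contains no non-edges. I would first compute $w(T)$. Travelling around the cycle, $T$ alternates between maximal runs of $V$-vertices and maximal runs of $S$-vertices; writing $t$ for the number of $V$-runs (equivalently, the number of $S$-runs), $T$ has $|V|-t$ weight-$0$ edges inside $V$-runs, $m-t$ weight-$N$ edges inside $S$-runs, and $2t$ weight-$L$ $(V,S)$-edges between consecutive runs, so $w(T) = (m-t)N + 2tL = mN$ because $N = 2L$. Hence every non-edge-free tour has weight exactly $mN$. Now take any 2-change from $T$ to a tour $T'$: it deletes two edges of $T$, each of weight in $\{0,L,N\}$, so of total weight at most $2N$. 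If $T'$ is again non-edge-free then $w(T') = mN = w(T)$, and if $T'$ contains a non-edge then $w(T') \ge mN - 2N + M > mN$ since $M \gg N$. In neither case is the move improving, so $T$ is 2-optimal.

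For the ``only if'' direction I would prove the contrapositive: if $T$ contains a non-edge, I exhibit an improving 2-change. Fix such a non-edge $e = \{u,v\}$ with $u,v \in V$, and write $T$ cyclically as $u = w_0, v = w_1, w_2, \dots, w_{n-1}$. For $j \in \{2, \dots, n-2\}$ the 2-change $S_T(e, \{w_j, w_{j+1}\})$ is non-degenerate and replaces $e$ and $\{w_j,w_{j+1}\}$ by the edges $\{u, w_j\}$ and $\{v, w_{j+1}\}$. If neither of these inserted edges is a non-edge, the resulting tour has weight at most $w(T) - M + 2N < w(T)$, so the move is improving. An index $j$ fails to have this property only if $w_j \in V \setminus (N_G(u) \cup \{u\})$ or $w_{j+1} \in V \setminus (N_G(v) \cup \{v\})$. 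The failing indices inject into these two sets; since $v$ lies in the first set but $v = w_1 \notin \{w_2,\dots,w_{n-2}\}$, and symmetrically $u$ lies in the second set but $u = w_0 \notin \{w_3,\dots,w_{n-1}\}$, at most $|V|-2$ indices fail the first condition and at most $|V|-2$ fail the second. As $\{2,\dots,n-2\}$ has $n-3 = |V| + m - 3$ elements and $m \ge |V|+1$, at least $m + 1 - |V| \ge 2$ indices are good, so an improving 2-change exists.

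The main obstacle is this last counting step: the subtle point is that eliminating one very expensive edge via a 2-change could in principle force the creation of a new non-edge among the two inserted edges, and ruling this out is exactly where the hypothesis $m \ge |V|+1$ enters — having strictly more $S$-vertices than $V$-vertices guarantees enough ``safe'' reconnection points. The remaining ingredients (non-degeneracy of $S_T(e,\{w_j,w_{j+1}\})$ for $j \in \{2,\dots,n-2\}$, and strictness of the weight inequalities given $M \gg N = 2L$) are routine bookkeeping; note also that the upper bound $m \le 2|V|$ plays no role in this lemma.
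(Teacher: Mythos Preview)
Your proof is correct, but both directions take a different route from the paper's.

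For the ``if'' direction your argument is arguably cleaner: you compute that every non-edge-free tour has total weight exactly $mN$ (using $N=2L$), so any 2-change either lands on another non-edge-free tour of the same weight or introduces a non-edge and gains at least $M-2N>0$. The paper instead performs a six-case analysis over the possible types of the two removed edges and checks that none yields a strict improvement; your global weight computation sidesteps that case split entirely.

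For the ``only if'' direction, however, the paper's argument is much simpler than your counting. Since $m>|V|$, every tour through $G_m$ must contain at least one $S$-edge $ab$; swapping the non-edge $uv$ with $ab$ inserts the edges $ua$ and $vb$, which are automatically $(V,S)$-edges and hence never non-edges, giving an improvement of $M+N-2L=M>0$ directly. By pairing the non-edge with an $S$-edge the paper avoids altogether the ``could the inserted edges themselves be non-edges?'' issue that your injection argument is built to handle, and the hypothesis $m\ge |V|+1$ is used only to guarantee that an $S$-edge is present, rather than to balance a count.
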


\begin{proof}
    To begin, we note that since $|S| > |V|$, any tour must contain at least one~$S$-edge. 
    Now suppose~$T$ contains a non-edge~$uv \notin E$, and let~$ab$ be an~$S$-edge in~$T$.
    Assume these vertices are traversed in the written order on~$T$. 
    Then we replace~$uv$ and~$ab$ by~$ua$ and~$vb$, a 2-change which decreases the tour length
    by~$M + N - 2L > 0$. Hence,~$T$ is not 2-optimal, proving one direction.

    For the other direction, suppose~$T$ contains no non-edges.
    Since~$M$ is much larger than~$N$ or~$L$, an improving 2-change cannot
    add any non-edges to the tour.
    Any 2-changes on~$T$ must then involve only edges of~$G$,~$S$-edges and~$(V, S)$-edges. 
    We go case by case,
    considering the possible types of the edges removed by any 2-change on~$T$.

    \begin{description}
        \item[\boldmath Two~$S$-edges.] Since the vertices involved are all vertices of~$S$
            the added edges are also~$S$-edges, and hence the 2-change does not change the length
            of the tour at all and is therefore not feasible.
    \item[\boldmath Two edges of~$G$.] Since these edges have zero weight the
            tour length cannot decrease, and the 2-change is not feasible.
    \item[\boldmath Two~$(V, S)$-edges.] There are two possibilities
            for the added edges. In one case we obtain one~$S$-edge and one edge of~$G$,
            for an improvement of~$2L - N = 0$. In the other case we obtain
            again two~$(V, S)$-edges, again for no improvement.
    \item[\boldmath One~$S$-edge~$ab$ and one edge~$uv$ of~$G$.] The added edges
            are~$au$ and~$bv$, both of which are~$(V, S)$-edges and thus of weight~$L$. The tour
            is thus shortened by~$N - 2L = 0$, and the 2-change yields no improvement.
    \item[\boldmath One~$(V, S)$-edge $au$ and one edge~$vw$ of~$G$.]
            The added edges are~$av$ and~$uw$. The added edges are again a~$(V, S)$-edge
            and an edge of~$G$, yielding no improvement.
    \item[\boldmath One~$(V, S)$-edge~$au$ and one~$S$-edge~$bc$.] 
            The added edges consist of one~$(V, S)$-edge and one~$S$-edge, and there
            is no change in tour length.
    \end{description}

    These cases cover all possibilities, and hence a tour that contains no non-edges is
    2-optimal, concluding the proof.
\end{proof}

Let~$T$ be a 2-optimal tour through~$G_m$. If we remove from~$T$ all edges incident
to~$S$, then we obtain a path cover of~$G$. We call the size of the resulting path cover
the number of \emph{segments} of~$T$. On the other hand, from any path cover
of~$G$, we can construct many 2-optimal tours by connecting the paths in the cover
using vertices of~$S$.

More formally, we say that a path cover~$\mathcal{P}$ \emph{corresponds} to~$T$ if we obtain~$\mathcal{P}$ by removing the edges of~$T$ incident to~$S$.
Note that two distinct path covers of~$G$ correspond to two disjoint sets of 2-optimal tours
through~$G_m$, and every 2-optimal tour corresponds to exactly one path cover.
The following lemma counts the number of 2-optimal tours that correspond to a single
path cover of size~$\ell$.

\begin{lemma}\label{lemma: pc to tours}
    A path cover of size~$\ell$ corresponds to exactly~$\frac{2^{\ell-1} \cdot m! \cdot (m-1)!}{(m-\ell)!}$ 2-optimal tours in~$G_m$
    with~$\ell$ segments.
\end{lemma}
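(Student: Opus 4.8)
The plan is to characterize the 2-optimal tours corresponding to a fixed path cover of size $\ell$, and then count them with an explicit bijection. Fix a path cover $\mathcal{P}=\{P_1,\dots,P_\ell\}$ of $G$ and let $T$ be a 2-optimal tour of $G_m$ to which $\mathcal{P}$ corresponds. By \Cref{lemma:2opt non-edges}, $T$ contains no non-edge, so every edge of $T$ between two vertices of $V$ is an edge of $G$; and by the definition of ``corresponds'', removing the edges of $T$ incident to $S$ leaves exactly the edge set $\bigcup_{i} E(P_i)$. Hence each $P_i$ occurs in $T$ as a contiguous segment whose two ends are joined to vertices of $S$, and no two segments are consecutive along $T$ (otherwise the edge joining them would be a further $V$-$V$ edge, changing the induced cover). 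Reading around the cycle, $T$ therefore alternates the $\ell$ segments $P_1,\dots,P_\ell$, each traversed in one of its two directions, with $\ell$ nonempty blocks of consecutive $S$-vertices whose union is $S$. Conversely, any tour assembled in this way uses only edges of $G$ (inside segments), $(V,S)$-edges (a segment end to an $S$-vertex), and $S$-edges (inside the $S$-blocks), hence no non-edge, so it is 2-optimal by \Cref{lemma:2opt non-edges} and the cover it induces is $\mathcal{P}$. As $m\ge |V|+1>\ell$, there are enough $S$-vertices to fill $\ell$ nonempty blocks, so such tours do exist.

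To count them I would first count \emph{oriented} tours---cyclic vertex sequences with a chosen traversal direction, taken up to rotation---of the above shape. Such an oriented tour is determined by three independent pieces of data: a directed cyclic ordering of the $\ell$ segments, of which there are $(\ell-1)!$; an orientation of each of the $\ell$ segments, $2^{\ell}$ in total; and an assignment of the $m$ vertices of $S$ to the $\ell$ gaps between consecutive segments, linearly ordered within each gap and with no gap empty. The number of such assignments is $m!\binom{m-1}{\ell-1}$: pick a linear order of $S$ ($m!$ ways) and cut it into $\ell$ nonempty consecutive runs ($\binom{m-1}{\ell-1}$ ways). Each of these pieces of data can be recovered from the oriented tour, so the number of oriented tours is $(\ell-1)!\,2^{\ell}\,m!\binom{m-1}{\ell-1}$. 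Finally, every undirected 2-optimal tour of this shape arises from exactly two oriented tours, because reversing the direction is a fixed-point-free involution on oriented tours---a directed cyclic sequence of at least three distinct vertices is never equal to its own reverse, and our tours have $m+|V|\ge 3$ vertices. Dividing by $2$ and using $(\ell-1)!\binom{m-1}{\ell-1}=\frac{(m-1)!}{(m-\ell)!}$ gives
\[
    (\ell-1)!\,2^{\ell-1}\,m!\binom{m-1}{\ell-1}=\frac{2^{\ell-1}\,m!\,(m-1)!}{(m-\ell)!},
\]
which is the asserted count.

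I expect the counting step, rather than the structural one, to be where care is needed: one must verify that the three pieces of data are genuinely independent and that distinct data give distinct oriented tours, that the reversal involution is indeed free, and that the degenerate cases are handled consistently---in particular segments consisting of a single vertex, and the smallest admissible $S$-blocks. The structural step itself is short: the weight relations $M\gg N=2L$ are used only through \Cref{lemma:2opt non-edges}, which simultaneously rules out non-edges and spurious $V$-$V$ edges in $T$ and thereby forces the alternating segment-and-block structure.
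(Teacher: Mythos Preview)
Your decomposition is different from the paper's. The paper's proof is $S$-centric: it first fixes an undirected Hamiltonian cycle on $S$, of which there are $\tfrac12(m-1)!$; then chooses $\ell$ of its $m$ edges to break, in $\binom{m}{\ell}$ ways; and finally inserts the $\ell$ paths of $\mathcal P$ into the resulting gaps in $2^\ell\ell!$ ways, so that the oriented-versus-undirected factor $\tfrac12$ is absorbed into the very first step. Your route is segment-centric: you arrange the segments in a directed cyclic order, orient each one, fill the $\ell$ gaps with $S$, and only at the end pass from oriented to undirected tours. Your version makes the claimed bijection explicit and separates the structural characterisation from the enumeration more cleanly; the paper's version is terser and sidesteps the oriented-tour bookkeeping entirely. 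Both yield the same product after the obvious simplifications.

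One point you flag but do not actually resolve: single-vertex segments. Your recoverability claim---that each of the three pieces of data can be read off the oriented tour---fails precisely here, since a one-vertex segment carries no orientation, and two of your $2^\ell$ choices then collapse to the same oriented tour. The count is therefore inflated by a factor of $2$ for every isolated vertex in $\mathcal P$ (e.g.\ for $V=\{u,v\}$ with no edge and $\ell=2$, the formula gives $2m!(m-1)$ while the true number of tours avoiding the non-edge $uv$ is $\tfrac12 m!(m-1)$). The paper's proof has the identical overcount, concealed in ``there are two possible ways to connect the path'', so the formula as stated is in fact only correct for path covers without isolated vertices; your argument is otherwise complete.
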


\begin{proof}
    By \Cref{lemma:2opt non-edges}, the 2-optimal tours are exactly those
    tours which contain no non-edges. Given a path cover~$\mathcal{P}$ of size~$\ell$,
    we can then construct a 2-optimal tour in~$G_m$ as follows. 
    
    Any tour must visit all vertices in~$S$ in some order. Hence, we first fix a tour~$T_S$ through~$S$; there are~$\frac{1}{2}(m-1)!$ such choices. Next, we break~$\ell$ edges of~$T_S$;
    there are~$\binom{m}{\ell}$ choices of which edges to break.

    We must now insert the~$\ell$ paths of~$\mathcal{P}$ in place of these broken edges,
    reconnecting the endpoints of each path to the endpoints of the broken
    edge it replaces.
    Note that whenever we perform this operation, there are two possible ways to
    connect the path. Moreover, there are~$\ell!$ ways to match an endpoint
    to a broken edge, for a total of~$2^\ell \cdot \ell!$ ways to insert the paths.
    
    Putting the pieces together, we thus have
    \[
        \frac{(m-1)!}{2} \cdot \binom{m}{\ell} \cdot 2^{\ell} \cdot \ell !
            = \frac{2^{\ell-1}\cdot m! \cdot (m-1)!}{(m-\ell)!}
    \]
    ways to construct a 2-optimal tour through~$G_m$.
\end{proof}

Let~$c(\ell, m) = \frac{2^{\ell-1}\cdot m! \cdot (m-1)!}{(m-\ell)!}$, and 
let~$C \in \mathbb{Z}^{|V| \times |V|}$ be a matrix with entries~$C_{ij} = c(i, j+|V|)$.
Recall that~$\ell$ runs from~$1$ to~$|V|$ and~$m$ runs from~$|V|+1$ to~$2|V|$. 

\begin{lemma}\label{lemma: matrix rank}
    The matrix~$C$ defined above has full rank.
\end{lemma}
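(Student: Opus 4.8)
The plan is to show that $C$ has full rank by exhibiting a convenient factorization that reduces the claim to the nonvanishing of a Vandermonde-type determinant. Write the entry $C_{ij} = c(i, j+|V|)$ with $c(\ell,m) = 2^{\ell-1}\,m!\,(m-1)!/(m-\ell)!$. The key observation is that the dependence of $c(\ell,m)$ on $\ell$, for fixed $m$, comes only through the factor $2^{\ell-1}/(m-\ell)! = 2^{\ell-1}(m-1)(m-2)\cdots(m-\ell+1)/(m-1)!$; that is, $c(\ell,m) = m!\cdot 2^{\ell-1}\cdot (m-1)(m-2)\cdots(m-\ell+1)$, a falling factorial in $m$ of degree $\ell-1$. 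So if I set $n=|V|$ and $x_j = m = j+n$ for $j=1,\dots,n$, then column $j$ of $C$ is $m!$ times the vector whose $\ell$-th component is $2^{\ell-1}(x_j-1)^{\underline{\ell-1}}$, where $t^{\underline{k}}$ denotes the falling factorial $t(t-1)\cdots(t-k+1)$ (with $t^{\underline 0}=1$).

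First I would pull out the nonzero scalars: $C = A \cdot M \cdot D$, where $D = \diag(x_1!,\dots,x_n!)$ scales the columns, $A = \diag(2^0,2^1,\dots,2^{n-1})$ scales the rows, and $M$ is the matrix with $M_{ij} = (x_j-1)^{\underline{i-1}}$. Since $A$ and $D$ are invertible diagonal matrices, $\rank C = \rank M$. Next I would argue that $M$ has full rank. The rows of $M$, indexed by $i=0,\dots,n-1$ (after shifting), are the falling-factorial polynomials $p_i(t) = t^{\underline{i}}$ evaluated at the points $t_j = x_j - 1 = j+n-1$, $j=1,\dots,n$. These $n$ points are distinct. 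The falling factorials $\{p_0,\dots,p_{n-1}\}$ form a basis of the space of polynomials of degree $< n$ (each $p_i$ has degree exactly $i$, so they are a triangular change of basis from $\{1,t,\dots,t^{n-1}\}$). Hence $M = P V$, where $V$ is the $n\times n$ Vandermonde matrix in the points $t_1,\dots,t_n$ and $P$ is the (lower-triangular, unit-diagonal, hence invertible) change-of-basis matrix expressing falling factorials in the monomial basis. Therefore $\det M = \det P \cdot \det V = \pm\prod_{j<k}(t_k - t_j) \neq 0$ since the $t_j$ are distinct.

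Putting this together, $\det C = \left(\prod_{i=1}^n 2^{i-1}\right)\left(\prod_{j=1}^n (j+n)!\right)\det M \neq 0$, so $C$ has full rank. The only mild subtlety — and the step I would be most careful about — is the bookkeeping of the index shifts ($\ell$ runs from $1$ to $n$, $m$ from $n+1$ to $2n$) and verifying the identity $1/(m-\ell)! = (m-1)^{\underline{\ell-1}}/(m-1)!$ precisely, so that the claimed factorization $C = A M D$ is exactly correct; once that algebraic identity is in hand, the rank computation is immediate from the Vandermonde argument. An alternative, if one prefers to avoid the falling-factorial basis change explicitly, is simply to observe that each column of $C$, up to the column scalar $m!\,$, is the image under the fixed invertible matrix $A P$ of the Vandermonde column $(1, t_j, t_j^2, \dots, t_j^{n-1})^T$, and distinctness of the $t_j$ finishes it.
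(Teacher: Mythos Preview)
Your proof is correct. The factorization $C = A M D$ checks out exactly: with $m=j+n$ one has $(m-1)^{\underline{\ell-1}}=(m-1)!/(m-\ell)!$, so $c(\ell,m)=2^{\ell-1}\,m!\,(m-1)^{\underline{\ell-1}}$, and then $M$ is a generalized Vandermonde matrix in the distinct abscissae $t_j=n+j-1$ with respect to the falling-factorial basis, whence $\det M\neq 0$.

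Your route is genuinely different from the paper's. The paper performs the same diagonal row/column scalings but then, after an additional row reversal, reduces to the matrix $C'_{ij}=1/(i+j-1)!$ and argues nonvanishing of $\det C'$ via an analytic device: it identifies $\det C'$ (up to sign) with the Wronskian at $x=1$ of the functions $x^{n+j-1}/(n+j-1)!$, invokes the classical fact that the Wronskian of analytic, linearly independent functions does not vanish identically, and then observes that this particular Wronskian is a monomial in $x$ times its value at $1$. Your argument stays entirely within linear algebra, needs no row reversal, and yields an explicit product formula for $\det C$; the paper's argument is shorter to state but imports a nontrivial analytic lemma (B\^ocher's theorem on Wronskians) and does not produce the determinant explicitly. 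Either approach is perfectly adequate here; yours is the more self-contained of the two.
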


\begin{proof}
    To start, we write out the entries of~$C$ explicitly. Letting~$n = |V|$,
    \[
        C_{ij} = \frac{2^{i-1} \cdot (n+j)!\cdot(n+j-1)!}{(n+j-i)!}.
    \]
    Scaling any row or column of a matrix uniformly does
    not change its rank. Hence, we multiply column~$j$ by~$\frac{1}{(n+j)!\cdot(n+j-1)!}$,
    and subsequently multiply row~$i$ by~$1/2^{i-1}$. Finally, interchanging any two rows
    also does not change the rank of the matrix, and hence we also mirror the rows
    of the matrix. The resulting matrix~$C'$ has entries
    \[
        C'_{ij} = \frac{1}{(i+j-1)!}.
    \]
    To show that this matrix has full rank, we first recall that a square matrix has
    full rank if and only if its determinant is nonzero. While we could in principle
    compute the determinant exactly, this is tedious and not necessary for our purposes. Hence,
    we use a concise argument from analysis to only show~$\det C' \neq 0$~\cite{yuanAnswerDeterminantMatrix2022},
    reproduced here for the sake of completeness. The proof uses the notion of
    the \emph{Wronskian}~$W(x)$ of a set of functions~$f_j(x)$, which is the determinant of the matrix
    with~$(i,j)$-th element~$f_j^{(i)}(x)$. If the functions are analytic, then~$W(x)$ vanishes
    identically if and only if the functions are linearly dependent \cite{bocherCertainCasesWhich1901}.

    We observe that the determinant of~$C'$ is, up to a sign, the Wronskian
    of the functions~$f_j(x) = \frac{x^{n+j-1}}{(n+j-1)!}$ evaluated at~$x = 1$.
    Since these functions are polynomials of different degrees, they are linearly independent
    and hence from analyticity of the polynomials it follows that
    the Wronskian does not vanish identically. By factoring out all the powers of~$x$ from the rows and columns, we see that the Wronskian is a power of~$x$
    multiplied by its value at~$x = 1$, which is~$\det C'$ up to a sign. As the Wronskian
    does not vanish identically, we must have~$\det C' \neq 0$.
\end{proof}

Now we proceed to our main algorithmic result, from which \sharpP{}-completeness of
\twoopt{} follows as a corollary.

\countingpathcovers*

\begin{proof}
    For a graph~$G$, let~$G_m$ be the complete weighted graph resulting from the reduction described
    above. Let~$a_{\ell}(G)$ be the number of path covers
    of size~$\ell$, and let~$b_{\ell}(G_m)$ be the number of 2-optimal tours on~$G_m$ consisting of~$\ell$ segments. From \Cref{lemma: pc to tours}, we have
    \[
        b_{\ell}(G_m) = c(\ell, m)a_\ell(G).
    \]
    We have~$f_\text{2-opt}(G_m) = \sum_{\ell=1}^{|V|} b_{\ell}(G_m)
        =\sum_{\ell=1}^{|V|} c(\ell, m)a_\ell(G)$.

    We aim to compute the numbers~$a_{\ell}(G)$ for each~$\ell \in [|V|]$. Let~$a = (a_{\ell}(G))_{\ell \in [|V|]}$, and let~$b = (f_\text{2-opt}(G_m))_{m = |V|+1}^{2|V|}$. Then the above
    yields the matrix equation~$b = C^Ta$.
    
    By \Cref{lemma: matrix rank}, the matrix~$C$ has full rank, and is hence invertible.
    While~$C$ is rather ill-conditioned, the elements of~$C$ can be encoded using a number of bits 
    polynomial in~$|V| + |E|$. Thus, after making~$|V|$ calls to~$f_\text{2-opt}$ to
    compute the vector~$b$, we can compute~$a$ in polynomial time. The entries~$a_\ell$
    of~$a$ are, by construction, exactly the number of path covers of size~$\ell$ of~$G$.
\end{proof}

\sharppcomplete*

\begin{proof}
    Membership in \sharpP{} is obvious, since the problem of verifying 2-optimality
    of a tour is in \Pol{}.
    For hardness, we rely on the \sharpP{}-hardness of \hampath{}, which was shown by
    Valiant~\cite{valiantComplexityEnumerationReliability1979}. Note that
    the number of Hamiltonian paths through a graph~$G$ is exactly the number of
    path covers of size~$\ell = 1$. 
    By \Cref{thm: counting path covers}, given oracle access to~$f_\text{2-opt}$, we
    can solve \hampath{} -- and thus every problem in \sharpP{} -- in polynomial time,
    and hence \twoopt{} is \sharpP{}-complete when restricted to complete graphs.
\end{proof}

\section{Counting 2-Optima in Random Instances}

While counting 2-optimal tours is in general hard on complete graphs,
we can still provide some results for special cases. In this section, we restrict
our attention to complete graphs on~$n = 2^k + 1$ vertices for some integer~$k$.
The weights of the edges of our graphs are drawn independently from the uniform
distribution on~$[0, 1]$.

The strategy we use is to bound the probability that a given tour is 2-optimal. Since
we assume the input graph is complete, this probability is the same for all tours.
It then suffices to multiply this probability by the total number of tours,
which is~$\frac{1}{2}(n-1)!$.

To bound the probability of a tour~$T$ being 2-optimal
we find a large set~$\S$ of mutually chord-disjoint 2-changes on~$T$.
We then apply a general result that links the probability of~$2$-optimality
of~$T$ to how often each edge of~$T$ is used in~$\S$ (\Cref{lemma:2-optimal chord-disjoint}).
Details of the construction of this set are given in \Cref{sec: construct S}.

\subsection{Orthant Probabilities and 2-Optimality}\label{sec: orthant and 2opt}

Let~$\S$ be a set of chord-disjoint 2-changes on a tour~$T$ on the complete
graph on~$n$ vertices, and define for~$x \in \real^E$
\[
    g_{\S}(x) = \exp\left(-\sum_{\{e, f\} \in P(\S)} \frac{x_e x_f}{\sqrt{k_e(\S) k_f(\S)}}\right).
\]
Now let~$X = (X_e)_{e \in T, k_e(\S) > 0}$ be a sequence of independent half-normal distributed
random variables with unit variance. We define the function
\[
    \G(\S) = \expect[g_{\S}(X)].
\]
Observe that~$\G$ is solely a function of~$\S$.

\begin{lemma}\label{lemma:2-optimal chord-disjoint}
    Let~$G$ be a complete graph on~$n$ vertices, with each edge
    of~$G$ independently assigned a weight drawn from~$U[0, 1]$. Let~$T$
    be any tour through~$G$. Let~$\mathcal{S}$ be a set of chord-disjoint
    2-changes on~$T$. Then
    \[
        \prob(\text{$T$ is 2-optimal})
            \leq 
            \G(\S) \prod_{\substack{e \in T \\ k_e(\S) > 0}} \sqrt{\frac{\pi}{2k_e(\S)}}.
    \]
\end{lemma}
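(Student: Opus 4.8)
The plan is to express the event that $T$ is 2-optimal in terms of the weights of the tour-edges and then bound the probability by relating it to an integral involving $g_\S$. Write $w_e$ for the (random) weight of tour-edge $e$. For a 2-change $S_T(e,f)\in\S$ removing tour-edges $e,f$ and adding chord-edges $c,d$, the change is improving precisely when $w_c+w_d < w_e+w_f$. Since $\S$ is chord-disjoint, the chord-edges appearing across the different 2-changes in $\S$ are all distinct, and each chord-edge occurs in at most one 2-change; in particular the weights $w_c+w_d$ are, conditioned on the tour-edge weights, sums of independent uniform random variables that are independent \emph{across} the distinct 2-changes. The event that $T$ is 2-optimal is contained in the event that none of the 2-changes in $\S$ is improving, i.e. $w_c+w_d\ge w_e+w_f$ for every $S_T(e,f)\in\S$. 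So I would first condition on the tour-edge weights $\{w_e : e\in T,\ k_e>0\}$ (and note tour-edges with $k_e=0$ are irrelevant), compute $\prob(w_c+w_d \ge w_e+w_f)$ for a single 2-change as a function of $w_e+w_f$, multiply over the chord-disjoint (hence conditionally independent) 2-changes, and then take the expectation over the tour-edge weights.

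The key estimate is the single-pair bound: for $c,d$ independent $U[0,1]$ and a fixed value $s = w_e + w_f \in [0,2]$, one has $\prob(w_c + w_d \ge s) \le e^{-s^2/2}$ (or some clean bound of this flavor that degrades gracefully; the Gaussian-type tail of a sum of two uniforms near the left end is what makes this work, and the $\tfrac12$ in the exponent is the natural constant). More precisely I would want a bound of the form $\prob(w_c+w_d\ge w_e+w_f)\le \exp(-\tfrac12(w_e+w_f)^2)$ valid for $w_e,w_f\in[0,1]$; expanding the square gives a factor $\exp(-\tfrac12 w_e^2 - \tfrac12 w_f^2 - w_e w_f)$. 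Taking the product over all $\{e,f\}\in P(\S)$ yields $\prod \exp(-w_e w_f)$ times, for each tour-edge $e$, a factor $\exp(-\tfrac12 k_e w_e^2)$ (since $e$ appears in $k_e$ pairs). Now substitute $x_e = \sqrt{k_e}\, w_e$: the cross terms become $w_e w_f = x_e x_f/\sqrt{k_e k_f}$, which is exactly the exponent in $g_\S$, and the factors $\exp(-\tfrac12 k_e w_e^2) = \exp(-\tfrac12 x_e^2)$ are precisely the (unnormalized) half-normal densities. This is the crux of why the seemingly unmotivated function $\G(\S)$ shows up.

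After the substitution, the expectation over the independent $U[0,1]$ tour-edge weights becomes
\[
    \expect_w\!\left[\,g_\S\!\big((\sqrt{k_e}\,w_e)_e\big)\,\prod_{e} e^{-\tfrac12 k_e w_e^2}\right]
    = \prod_e \sqrt{\frac{\pi}{2 k_e}}\;\expect_X[g_\S(X)],
\]
where $X = (X_e)$ are independent \emph{half-normal} with unit variance: the point is that on $[0,1]$ the density of $w_e$ is $1$, so $e^{-\tfrac12 k_e w_e^2}\dd w_e$ pushes forward under $x_e=\sqrt{k_e}w_e$ to $\tfrac{1}{\sqrt{k_e}}e^{-x_e^2/2}\dd x_e$ on $[0,\sqrt{k_e}]$; bounding the integral over $[0,\sqrt{k_e}]$ by the integral over $[0,\infty)$ (the integrand $g_\S\cdot\prod e^{-x^2/2}$ is nonnegative, so extending the domain only increases it) and normalizing the half-normal density (which carries a factor $\sqrt{2/\pi}$ per coordinate) produces exactly $\prod_e\sqrt{\pi/(2k_e)}\cdot\expect_X[g_\S(X)] = \G(\S)\prod_e\sqrt{\pi/(2k_e)}$. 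Combining the chain of inequalities gives the claimed bound.

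I expect the main obstacle to be establishing the single-pair tail bound $\prob(w_c+w_d\ge w_e+w_f)\le \exp(-\tfrac12(w_e+w_f)^2)$ cleanly for all relevant values, and checking it remains valid (not just asymptotically) on the whole range $w_e+w_f\in[0,2]$ — for $s$ close to $2$ the left side is tiny while $e^{-s^2/2}$ is not tiny, so that range is fine, but the delicate regime is small $s$ where $\prob(w_c+w_d\ge s)\approx 1 - s^2/2$ and one needs $1-s^2/2\le e^{-s^2/2}$, which is exactly the standard inequality $1-t\le e^{-t}$. A secondary point requiring care is the conditional independence bookkeeping: I must use chord-disjointness to justify that, given the tour-edge weights, the events $\{w_{c_i}+w_{d_i}\ge w_{e_i}+w_{f_i}\}$ over $S_T(e_i,f_i)\in\S$ are independent, which relies on each chord-edge weight being a fresh independent uniform not reused in another pair — and on the chord-edges being genuinely distinct from the tour-edges as well. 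Everything after the substitution is a routine product-of-one-dimensional-integrals computation.
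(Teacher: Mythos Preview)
Your proposal is correct and follows essentially the same route as the paper's proof: condition on the tour-edge weights, use chord-disjointness for conditional independence, bound each factor by $e^{-(w_e+w_f)^2/2}$, expand the square, substitute $x_e=\sqrt{k_e}\,w_e$, and extend the domain to $[0,\infty)$ to recognize the half-normal expectation. The only place where you are a bit too casual is the range $1\le s\le 2$: there the left side is $\tfrac12(2-s)^2$, and near $s=1$ both sides are about $1/2$, so the inequality $\tfrac12(2-s)^2\le e^{-s^2/2}$ is not quite as automatic as you suggest; the paper handles it with a short separate convexity argument.
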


Before moving to a proof, we need a technical lemma.

\begin{lemma}\label{lemma:gaussian bound}
    For~$1 \leq x \leq 2$, it holds that~$\frac{1}{2}(2 - x)^2 \leq e^{-x^2/2}$.
\end{lemma}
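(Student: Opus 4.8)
The plan is to reduce the claim to an elementary one-variable calculus estimate. Both sides are nonnegative on $[1,2]$, and at $x = 2$ the left-hand side vanishes while $e^{-x^2/2} > 0$, so the inequality is trivial there. Hence it suffices to treat $x \in [1, 2)$, where $\frac{1}{2}(2-x)^2 > 0$ and we may pass to logarithms: the claim becomes $h(x) \geq 0$ on $[1,2)$, where
\[
    h(x) = -\frac{x^2}{2} + \ln 2 - 2\ln(2 - x).
\]

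First I would compute the derivative $h'(x) = -x + \frac{2}{2-x}$ and rewrite it as
\[
    h'(x) = \frac{2 - x(2 - x)}{2 - x} = \frac{(x-1)^2 + 1}{2 - x},
\]
which is strictly positive for $x \in [1, 2)$ since the numerator is at least $1$ and the denominator is positive. Therefore $h$ is strictly increasing on $[1, 2)$.

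It then remains only to check the left endpoint: $h(1) = -\tfrac{1}{2} + \ln 2 - 2\ln 1 = \ln 2 - \tfrac{1}{2} > 0$, using $\ln 2 > 0.69 > 0.5$. Combining monotonicity with this value gives $h(x) \geq h(1) > 0$ for every $x \in [1, 2)$, which is exactly the desired inequality on that interval; together with the trivial case $x = 2$ this proves the lemma.

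There is essentially no obstacle here: the only mild subtlety is that the logarithmic substitution degenerates at $x = 2$, but that boundary point is handled directly and separately. One could alternatively work with $\phi(x) = e^{-x^2/2} - \tfrac{1}{2}(2-x)^2$ throughout and argue about its sign via its extrema, but the logarithmic form makes the sign of the derivative transparent and keeps the bookkeeping minimal.
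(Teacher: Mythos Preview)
Your proof is correct and takes a genuinely different route from the paper's. The paper argues geometrically: it notes that $e^{-x^2/2}$ is strictly convex on $(1,\infty)$ (since its second derivative $(x^2-1)e^{-x^2/2}$ is positive there), so it lies above its tangent line $x\mapsto (2-x)e^{-1/2}$ at $x=1$; the remaining task is then the elementary quadratic inequality $\tfrac12(2-x)^2 \le (2-x)e^{-1/2}$ on $(1,2]$. Your approach instead passes to logarithms and shows directly that the difference $h(x)=-\tfrac{x^2}{2}+\ln 2-2\ln(2-x)$ is increasing on $[1,2)$ via the clean factorization $h'(x)=\frac{(x-1)^2+1}{2-x}$. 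Both arguments reduce to checking the single endpoint $x=1$, where $\ln 2 > \tfrac12$ (equivalently $\tfrac12 < e^{-1/2}$). Your version is arguably more streamlined, since the sign of $h'$ falls out immediately without an auxiliary linear comparison; the paper's version has the mild advantage of avoiding the logarithmic singularity at $x=2$ and working with a single polynomial inequality on the whole closed interval.
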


\begin{proof}
    First, observe that the inequality holds for~$x = 1$, as~$\frac{1}{2} < 1/\sqrt{e} \approx 0.6$.
    Next, note that~$e^{-x^2/2}$ is strictly convex on~$(1, \infty)$.
    Thus, this function lies entirely above its tangent line
    at~$x = 1$ for all~$x > 1$. The tangent line is given by~$x \mapsto 2e^{-1/2} - e^{-1/2}x$. It then suffices to show that~$\frac{1}{2}(2-x)^2 \leq 2e^{-1/2} - e^{-1/2}x$ for~$x \in (1, 2]$.
    Indeed,
    \[
        \frac{1}{2}(2-x)^2 \leq 2e^{-1/2} - e^{-1/2}x \iff
            2\left(1 - \frac{1}{\sqrt{e}}\right) - \left(2 - \frac{1}{\sqrt{e}}\right)x
                 + \frac{1}{2}x^2 \leq 0,
    \]
    which holds for~$x \in [2-2/\sqrt{e}, 2] \supset (1, 2]$.
\end{proof}

\begin{proof}[Proof of \Cref{lemma:2-optimal chord-disjoint}]
    Let~$S$ be a 2-change on~$T$ and write~$\Delta(S)$ for the improvement in tour length
    due to~$S$. Then~$T$ is 2-optimal if and only if for all possible 2-changes~$S$
    on~$T$ we have~$\Delta(S) \leq 0$, and so
    \[
        \prob(\text{$T$ is 2-optimal}) \leq \prob\left(
            \bigwedge_{S \in \mathcal{S}} \Delta(S) \leq 0
        \right).
    \]
    If we condition on the values of the weights on the edges of~$T$, then,
    since the 2-changes in~$\mathcal{S}$ are all chord-disjoint, the events~$\{\Delta(S) \leq 0\}_{S \in \mathcal{S}}$
    are independent subject to this conditioning. Thus,
    \[
        \prob(\text{$T$ is 2-optimal} \given w(e) = s_e, \, e \in T) \leq \prod_{S \in \mathcal{S}}\prob\left(
            \Delta(S) \leq 0 \given w(e) = s_e,\, e \in T
        \right).
    \]

    Consider a 2-change~$S$ that involves the tour-edges~$e_i$ and~$e_j$. Then
    \[
        \prob(\Delta(S) \leq 0 \given w(e_i) = s_i,\, i \in [n])
            = \prob\left(X + Y \geq s_i + s_j\right),
    \]
    where~$X, Y \sim U[0,1]$. We can directly compute this latter probability, yielding
    \[
        \prob\left(X + Y \geq s_i + s_j\right) = \int_{s_i + s_j}^2 f_{X + Y}(x)\dd x,
    \]
    where the integrand is the density of~$X+Y$, given by
    \[
        f_{X + Y}(x) = \begin{cases}
            x, & \text{if~$0 \leq x \leq 1$}, \\
            2 - x, & \text{if~$1 < x \leq 2$}, \\
            0, & \text{otherwise}.
        \end{cases}
    \]
    Integrating this density, we obtain
    \[
        \prob(X + Y \geq s_i + s_j) = \begin{cases}
            1, & \text{if } s_i + s_j \leq 0, \\
            1 - \frac{1}{2}(s_i + s_j)^2, & \text{if } s_i + s_j \leq 1, \\ 
            \frac{1}{2}(2 - (s_i + s_j))^2, & \text{if } 1 < s_i + s_j \leq 2, \\
            0, & \text{if } s_i + s_j > 2.
        \end{cases}
    \]
    On~$[0, 2]$, this function is bounded from above by 
    \[
        e^{-\frac{(s_i + s_j)^2}{2}} = e^{-s_i^2/2}e^{-s_j^2/2} e^{-s_i s_j}.
    \]
    For~$0\leq s_i + s_j \leq 1$, this follows from the standard inequality~$1 + x \leq e^x$,
    while for~$1 < s_i + s_j \leq 2$ we use \Cref{lemma:gaussian bound}.

    Using this upper bound, we can write
    \[
        \prob(\text{$T$ is 2-optimal} \given w(e) = s_e, \, e \in T)
            \leq \prod_{\{e, f\} \in P(\S)} e^{-s_e^2/2}e^{-s_f^2/2}e^{-s_e s_f}.
    \]
    Note that for a given edge~$e \in T$, the factor factor~$e^{-s_e^2/2}$ appears~$k_e$ times. Hence, 
    \[
        \prob(\text{$T$ is 2-optimal} \given w(e) = s_e, \, e \in T)
            \leq \prod_{e \in T} e^{-k_e s_e^2/2} \cdot \prod_{\{e, f\} \in P(\S)}
                e^{-s_e s_f}.
    \]

    We now get rid of the conditioning,
    \begin{align*}
        \prob(\text{$T$ is 2-optimal}) &\leq \int_0^1 \cdots \int_0^1 
                    \exp\left(-\frac{1}{2}\sum_{\substack{e \in T \\ k_e > 0}} k_e s_e^2\right)
                    \exp\left(-\sum_{\{e, f\} \in P(\S)} s_e s_f\right)
                \prod_{\substack{e \in T \\ k_e > 0}}\dd s_e.
    \end{align*}
    We perform a change of variables: let~$x_e = \sqrt{k_e} s_e$. Next, we let the
    upper limit of each integral go to infinity; it is easily verified that this
    leads to a negligible loss in the upper bound. We find
    \begin{align*}
        \prob(\text{$T$ is 2-optimal}) &\leq  \prod_{\substack{e \in T \\ k_e > 0}} \frac{1}{\sqrt{k_e}}
        %\int_0^\infty \cdots \int_0^\infty
        \int_{\real_+^d}
                    \exp\left(-\frac{1}{2}\sum_{\substack{e \in T \\ k_e > 0}} x_e^2\right)
                    \exp\left(-\sum_{\{e, f\} \in P(\S)} \frac{x_e x_f}{\sqrt{k_e k_f}}\right)
                \dd x,
    \end{align*}
    where~$\dd x = \prod_{\substack{e\in T \\ k_e > 0}} \dd x_e$.
    We recognize in this expression the function~$g_{\S}$, as well as
    the un-normalized probability density function of the half-normal distribution. Adding
    in the appropriate normalization factor of~$\sqrt{2/\pi}$ for each variable then yields the claim.
\end{proof}

Bounding~$\prob(\text{$T$ is 2-optimal})$ now involves two steps. First, we must construct
a set~$\S$ of chord-disjoint 2-changes such that each edge of~$T$ is used in many 2-changes in~$\S$.
Second, given this set, we must bound~$\G(\S)$.

We remark now that~$\G(\S)$ is trivially bounded from above by 1.
However, this leaves a factor  of about~$(\pi/2)^{n/2}$. 
Although this factor is small compared to~$\prod_{e \in T, \,k_e > 0} \frac{1}{\sqrt{k_e(\S)}}$ for the
set~$\S$ we construct,  leaving it in is somewhat
unsatisfactory. We thus make an attempt to prove a stronger bound for~$\G$.

Computing~$\G(\S)$ directly is unfeasible. It helps to recast it in terms of
a positive orthant probability.

\begin{lemma}\label{lemma: rewrite to orthant probability}
    For a set~$\S$ of chord-disjoint 2-changes on a tour~$T$, let~$k$ denote the number of
    tour-edges with~$k_e(\S) > 0$. Label these edges of~$T$ arbitrarily from~$1$ to~$k$.
    For any~$I \subseteq [k]$, the function~$\G(\S)$ is bounded from above
    by~$2^{|I|} \cdot \sqrt{\det{\Sigma}} \cdot \prob\left(\bigwedge_{i \in I} Z_i > 0\right)$, where~$(Z_i)_{i \in I}$
    is distributed according to a multivariate normal distribution with mean 0 and
    inverse covariance matrix~$\Sigma^{-1} \in \mathbb{R}^{|I| \times |I|}$ with entries indexed by~$I$,
    \[
        \Sigma^{-1}_{ij} = \begin{cases}
            1, & \text{ if~$i = j$}, \\
            s_{ij}, & \text{ otherwise},
        \end{cases}
    \]
    where~$s_{ij} \leq 1/\sqrt{k_i k_j}$.
\end{lemma}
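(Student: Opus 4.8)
The plan is to recognise $\G(\S)$ as a Gaussian integral over the positive orthant in $k$ dimensions, then discard the coordinates not indexed by $I$ to obtain an analogous integral in $|I|$ dimensions, which is an orthant probability up to an explicit normalising constant.

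First I would expand the definition of $\G(\S)$. The unit-variance half-normal density is $x \mapsto \sqrt{2/\pi}\,e^{-x^2/2}$ on $\real_+$, so collecting the quadratic and bilinear terms in the exponent yields
\[
    \G(\S) = \left(\frac{2}{\pi}\right)^{k/2} \int_{\real_+^k} \exp\left(-\frac{1}{2}\, x^T M x\right) \dd x ,
\]
where $M \in \real^{k \times k}$ is the symmetric matrix with $M_{ii} = 1$, with $M_{ij} = 1/\sqrt{k_i k_j}$ whenever $\{i,j\} \in P(\S)$, and with $M_{ij} = 0$ otherwise; this is because $\sum_i x_i^2 + 2\sum_{\{i,j\} \in P(\S)} x_i x_j / \sqrt{k_i k_j} = x^T M x$. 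The feature of $M$ I intend to exploit is that all of its entries are non-negative.

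Next I would throw away the coordinates outside $I$. Splitting $x = (x_I, x_{I^c})$ and writing $x^T M x$ block-wise, the cross term $2 x_I^T M_{I, I^c} x_{I^c}$ and the off-diagonal part of $x_{I^c}^T M_{I^c, I^c} x_{I^c}$ are both non-negative on the domain of integration, since there all $x_i > 0$ and $M$ has non-negative entries. Hence $x^T M x \ge x_I^T M_{II} x_I + \|x_{I^c}\|_2^2$, and bounding the integrand by this and integrating out each $x_i$ with $i \in I^c$ over $(0, \infty)$ — each such integral contributing a factor $\sqrt{\pi/2}$ — leaves
\[
    \G(\S) \le \left(\frac{2}{\pi}\right)^{k/2} \left(\frac{\pi}{2}\right)^{|I^c|/2} \int_{\real_+^{|I|}} \exp\left(-\frac{1}{2}\, x_I^T M_{II} x_I\right) \dd x_I = \left(\frac{2}{\pi}\right)^{|I|/2} \int_{\real_+^{|I|}} \exp\left(-\frac{1}{2}\, x_I^T M_{II} x_I\right) \dd x_I .
\]
Finally, $M_{II}$ has unit diagonal and off-diagonal entries in $[0, 1/\sqrt{k_i k_j}]$; I would take $\Sigma^{-1}$ to be any symmetric positive-definite matrix indexed by $I$ with unit diagonal and off-diagonal entries $s_{ij} \le (M_{II})_{ij} \le 1/\sqrt{k_i k_j}$ (shrinking the off-diagonal of $M_{II}$ towards zero until positive-definiteness is reached, or simply taking $\Sigma^{-1} = M_{II}$ if it is already positive definite). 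Since all $x_i > 0$, we have $x_I^T \Sigma^{-1} x_I \le x_I^T M_{II} x_I$, so the last integral is at most $\int_{\real_+^{|I|}} \exp(-\frac{1}{2} x_I^T \Sigma^{-1} x_I) \dd x_I = (2\pi)^{|I|/2} \sqrt{\det \Sigma} \, \prob\big(\bigwedge_{i \in I} Z_i > 0\big)$ for $Z \sim \N(0, \Sigma)$, the last equality following by dividing and multiplying by the $\N(0,\Sigma)$ normalising constant. Substituting and using $(2/\pi)^{|I|/2} (2\pi)^{|I|/2} = 2^{|I|}$ gives exactly the stated bound.

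The main points requiring care are: (i) the direction of the monotonicity in the final step — one may only \emph{decrease} off-diagonal entries of the quadratic form while keeping the integral an upper bound, which is exactly why the hypothesis reads $s_{ij} \le 1/\sqrt{k_i k_j}$ and not an equality; and (ii) ensuring that $\Sigma^{-1}$ is genuinely positive definite so that the Gaussian normalisation identity applies. The matrix $M$ is in general only positive semidefinite — it is the identity plus the degree-normalised adjacency matrix of the conflict graph on tour-edges ($e$ adjacent to $f$ iff $\{e,f\} \in P(\S)$, with $\deg(e) = k_e$), whose spectrum lies in $[-1,1]$, so a bipartite component of this graph would force the eigenvalue $-1$ — hence passing to a strictly positive-definite $\Sigma^{-1}$ by shrinkage is in general necessary. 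Everything else is routine bookkeeping of Gaussian constants.
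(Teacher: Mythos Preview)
Your proof is correct and follows essentially the same route as the paper: drop all cross terms in the exponent that involve a variable outside~$I$ (the paper phrases this as ``$g_{\S}\le 1$ and decreasing, so set coefficients of variables outside~$I$ to zero''), integrate those variables out against their half-normal densities, then shrink the remaining off-diagonal coefficients and renormalise to a Gaussian orthant probability. Your treatment is in fact more careful than the paper's on one point: the paper silently assumes $\Sigma^{-1}$ is a valid inverse covariance, whereas you correctly note that the natural choice $M_{II}$ need only be positive \emph{semi}definite and explain that the freedom $s_{ij}\le 1/\sqrt{k_ik_j}$ allows shrinking to a strictly positive-definite matrix.
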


\begin{proof}
    Since~$g_{\S}(x) \leq 1$ and~$g_{\S}$ is decreasing in each variable,
    we can bound~$g_{\S}$ from above uniformly by setting the coefficients of any subset of the variables to zero. 
    Setting these to zero for
    all variables outside~$I$ then leaves
    \begin{align*}
        \G(\S) \leq \left(\frac{2}{\pi}\right)^{|I|/2} \int_{\real_+^{|I|}}
            \exp\left(
                -\frac{1}{2}\sum_{i \in I} x_i^2 
            \right)\exp\left(-\sum_{i \in P_I(\S)} \frac{x_i x_j}{\sqrt{k_i k_j}}\right) \prod_{i \in I} \dd x_i,
    \end{align*}
    where~$P_I(\S)$ is the same as~$P(\S)$, except we keep only pairs with both elements
    in~$I$.

    Observe that~$e^{-x_i x_j / \sqrt{k_i k_j}} \leq e^{-s_{ij} x_i x_j}$, allowing
    us to replace the coefficients of these products. We now only need to insert
    the appropriate normalization factor for a multivariate normal distribution. Comparing
    the resulting expression with
    \Cref{eq: multivariate normal distribution} completes the proof.
\end{proof}

Still, computing the positive orthant probability directly is rather difficult.
Explicit formulas are known for low-dimensional cases, as well as general recursive
formulas~\cite{abrahamsonOrthantProbabilitiesQuadrivariate1964,chengOrthantProbabilitiesFour1969,davidNoteEvaluationMultivariate1953},
but none of these are particularly helpful in bounding~$\G(\S)$.
As we only need a nontrivial upper bound, some simplifications are possible.
In \Cref{thm: orthant expectation} (restated below), we show that it suffices to bound
the expected squared norm of a multivariate normal vector. 

In the proof of \Cref{thm: orthant expectation}, we need another technical
lemma.

\begin{lemma}\label{lemma: string pairs}
    Let~$S$ be a random string uniformly chosen from~$\{1, -1\}^{d} \setminus \{1^d, -1^d\}$
    for~$d \geq 2$.
    Then we have~${\prob(S_i + S_j = 0) \geq \frac{1}{2}}$ for any distinct~$i, j \in [d]$.
\end{lemma}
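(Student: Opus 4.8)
The plan is to count outcomes directly, since the sample space is small and explicit. Fix distinct $i, j \in [d]$. The set $\{1,-1\}^d \setminus \{1^d, -1^d\}$ has exactly $2^d - 2$ elements, so it suffices to show that at least $\tfrac{1}{2}(2^d - 2) = 2^{d-1} - 1$ of these strings satisfy $S_i + S_j = 0$, equivalently $S_i \neq S_j$.

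First I would count the favourable strings inside the full cube $\{1,-1\}^d$: partitioning according to whether $S_i = S_j$ or $S_i \neq S_j$ and letting the remaining $d - 2$ coordinates range freely shows that exactly $2^{d-1}$ of the $2^d$ strings satisfy $S_i \neq S_j$. The key observation is then that the two removed strings $1^d$ and $-1^d$ both lie in the \emph{other} class, since they have $S_i = S_j$; hence deleting them from the sample space removes no string with $S_i + S_j = 0$. Consequently the number of favourable outcomes in $\{1,-1\}^d \setminus \{1^d, -1^d\}$ is still exactly $2^{d-1}$.

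Putting this together,
\[
    \prob(S_i + S_j = 0) = \frac{2^{d-1}}{2^d - 2} = \frac{1}{2}\cdot\frac{2^{d-1}}{2^{d-1} - 1} \;\geq\; \frac{1}{2},
\]
where the final inequality holds because $2^{d-1} \geq 2^{d-1} - 1 > 0$ for $d \geq 2$ (in fact it is strict). There is no substantive obstacle in this argument; the only point that needs a moment's care is verifying that neither excluded string is one of the favourable strings, which is immediate from $S_i = S_j$ holding for both $1^d$ and $-1^d$.
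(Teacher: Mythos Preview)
Your proof is correct and arguably cleaner than the paper's. The paper proceeds via the principle of deferred decisions: it first fixes the remaining $d-2$ coordinates and then conditions on whether that substring equals $1^{d-2}$, $-1^{d-2}$, or neither, obtaining conditional probabilities $\tfrac{2}{3}$, $\tfrac{2}{3}$, and $\tfrac{1}{2}$ respectively (with the case $d=2$ handled separately). Your approach instead counts favourable outcomes globally, exploiting the single observation that both excluded strings lie in the unfavourable class $\{S_i=S_j\}$; this yields the exact value $\prob(S_i+S_j=0) = 2^{d-1}/(2^d-2)$ in one line and avoids any case split on $d$. The paper's conditioning argument is perhaps more in the spirit of a probabilistic proof, but your direct count is shorter and recovers strictly more information.
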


\begin{proof}
    The case~$d = 2$ yields~$\prob(S_1 + S_2 = 0) = 1 \geq \frac{1}{2}$; hence, we assume~$d > 2$
    in the remainder.
    Note next that by symmetry, it suffices to consider~$i = 1$,~$j = 2$.
    The event~$S_1 + S_2 = 0$ occurs if and only if~$(S_1, S_2)$ is either~$(1, -1)$
    or~$(-1, 1)$.

    We use the principle of deferred decisions. Suppose that the remaining~$d-2$ variables have
    been fixed. If the string formed by these remaining variables is equal to neither~$1^{d-2}$ nor~$-1^{d-2}$, then the outcome of drawing~$(S_1, S_2)$ is unconstrained.
    There are four outcomes and by symmetry each outcome has equal probability, yielding the claim.
    If the remaining string is~$1^{d-2}$, then there are only three
    possible outcomes for~$(S_1, S_2)$, each with equal probability. Two of these outcomes
    satisfy~$S_1 + S_2 = 0$; hence in this case~$\prob(S_1 + S_2 = 0) = \frac{2}{3} \geq \frac{1}{2}$. The case where the remaining
    string is~$-1^{d-2}$ is identical.
\end{proof}

\orthantexpectation*

\begin{proof}
    We prove the first claim, as the second claim follows trivially.
    For~$s \in \{-1, 1\}^d$, let~$\real_s^d$ be the orthant
    corresponding to~$s$, given by the points~$x \in \real^d$ satisfying
    \[
        s_i x_i > 0, \quad i \in [d].
    \]
    With this notation, the positive and negative orthants are given by~$\real_{\pm}^d = \real^d_{\pm 1^d}$.
    
    Since~$\real^d = \bigcup_{s \in \{\pm 1\}^d} \real^d_s$ and the orthants are mutually
    disjoint,~$\sum_{s \in \{\pm 1\}^d} \prob(\real_s) = 1$.
    By symmetry,~$\prob(\real_+^d) = \prob(\real_-^d)$, and so
    \begin{align}\label{eq: orthant probability rearranged}
        \prob(\real_+^d) = \frac{1}{2} 
            - \frac{1}{2} \sum_{\substack{s \in \{\pm 1\}^d \\ s \neq \pm 1^d}}
                \prob(\real_s^d).
    \end{align}

    Given~$s \in \{\pm 1\}^d$, define the linear
    transformation~$R_s(x) = (s_i x_i)_{i \in [d]}$.
    Any~$x \in \real_s^d$ can then be written as~$x = R_s(x')$ for some~$x' \in \real_+^d$. 
    Let~$\Sigma_s^{-1}$ denote the matrix with the same entries as~$\Sigma^{-1}$,
    but with zeroes on the diagonal and zeroes for any entries~$(i, j)$ with~$s_i + s_j \neq 0$.
    Now note that for~$x \in \real_s^d$,
    \[
        \frac{1}{2}x^T \Sigma^{-1} x = \frac{1}{2}\sum_{i=1}^{d}\sum_{j=1}^d \Sigma_{ij}^{-1}x_i x_j
            = \frac{1}{2}\sum_{i=1}^d\sum_{j=1}^d \Sigma^{-1}_{ij} x'_i x'_j 
            - \sum_{i = 1}^d \sum_{j=1}^d \Sigma_{s, ij}^{-1} x'_i x'_j,
    \]
    since only the terms satisfying~$s_i + s_j = 0$ change sign under~$R_s$.
    Thus, we have
    \begin{align*}
        \prob(\real_s^d)
            &= \frac{1}{(2\pi)^{d/2}\sqrt{\det \Sigma}} \int_{\real_+^d}
                e^{-\frac{1}{2}x^T \Sigma^{-1} x}
                \exp\left(\sum_{i=1}^d\sum_{j=1}^d
                    \Sigma^{-1}_{s, ij} x_i x_j\right)
                \dd x \\
            &= \expect\left[
                \exp\left(\sum_{i=1}^d \sum_{j=1}^d
                    \Sigma^{-1}_{s,ij} X_i X_j\right)
                    \,\Biggl\vert\, \real_+^d
            \right] \prob(\real_+^d).
    \end{align*}
    Inserting this into \Cref{eq: orthant probability rearranged} and
    rearranging yields
    \begin{align*}
        \prob(\real_+^d) &= \left(
            2 + \sum_{\substack{s \in \{1, -1\}^d \\ s \neq \pm 1^d}}
                \expect\left[
                    \exp\left(\sum_{i=1}^d\sum_{j=1}^d \Sigma_{ij}^{-1} X_i X_j\right)
                \,\Biggl\vert\, \real_+^d\right]
        \right)^{-1}.
    \end{align*}
    
    Since~$\exp(\cdot)$ is a convex function, we bound the denominator from
    below by moving the expectation operator inside the~$\exp(\cdot)$ using Jensen's inequality.
    To shorten our notation, we replace the variables~$X_i$ by~$Z_i$ distributed according
    to a multivariate normal distribution truncated
    from below at zero. We abbreviate the expectation with respect to~$Z$ by~$\expect_Z[\cdot]$.
    Then (discarding the~$2$ in the denominator above)
    \[
        \prob(\real_+^d) \leq
            \left(\sum_{\substack{s \in \{\pm 1\}^d \\ s \neq \pm 1^d}}\exp\left(
            \expect_Z\left[
            \sum_{i=1}^d\sum_{j=1}^d \Sigma_{s,ij}^{-1} Z_i Z_j\right]
            \right)\right)^{-1}.
    \]
    Let~$S$ be a random string drawn uniformly from~$\{\pm 1\}^d \setminus \{\pm 1^d\}$.
    Observe that the sum in brackets above is the same as
    \[
        \left(2^{d}-2\right)\expect_S\left[\exp\left(
            \expect_Z\left[
            \sum_{i=1}^d \sum_{j=1}^d\Sigma_{S,ij}^{-1} Z_i Z_j\right]
            \right)\right].
    \]
    Another application of Jensen's inequality moves the~$\expect_S[\cdot]$ into the~$\exp(\cdot)$. To simplify the expression yet further, we bound~$2^d - 2 \geq 2^{d-1}$ for~$d \geq 2$.

    Let~$Y_{ij}(S)$ be an indicator random variable taking a value of 1
    if~$\Sigma^{-1}_{s, ij} > 0$ and 0 otherwise. Then~$\Sigma^{-1}_{S, ij} = Y_{ij}(S)\Sigma^{-1}_{ij}$. Note that~$Y_{ij}(S) = 1$ if and only if~$S_i + S_j = 0$. We
    then use \Cref{lemma: string pairs} to obtain~$\expect_S[Y_{ij}(S)] = \prob(Y_{ij}(S) = 1) \geq \frac{1}{2}$. Using~$Y_{ij}(S)$, we further rewrite our bound to
    \begin{align*}
        \expect_S\left[\expect_Z\left[
            \sum_{i=1}^d\sum_{j=1}^d \Sigma_{s,ij}^{-1} Z_i Z_j\right]\right]
            &= 
           \sum_{i=1}^d\sum_{\substack{j=1 \\ j \neq i}}^d
                \expect_S\left[
                    \expect_Z\left[
                        Y_{ij}(S) \Sigma_{ij}^{-1} Z_i Z_j
                        \right]
                    \right] \\
            &= 
            \sum_{i=1}^d\sum_{\substack{j=1 \\ j \neq i}}^d
                \expect_S[Y_{ij}(S)]\left[\expect_Z\left[
                \Sigma_{ij}^{-1} Z_i Z_j\right]\right] \\
            &\geq
            \frac{1}{2}\sum_{i=1}^d\sum_{\substack{j=1 \\ j \neq i}}^d
                \Sigma_{ij}^{-1}
                \expect_Z\left[
                    Z_i Z_j
                \right].
    \end{align*}
    The second equality follows from the independence of~$\{Z_i\}_{i=1}^d$ and~$Y_{ij}(S)$. 

    For the final step we use \Cref{thm: amemiya}, from which we conclude
    \begin{align*}
        \frac{1}{2}\sum_{i=1}^d\sum_{\substack{j=1 \\ j \neq i}}^d \Sigma_{ij}^{-1}
            \expect_Z\left[Z_i Z_j\right]
            &= \frac{1}{2}\sum_{i=1}^d \sum_{j=1}^d \Sigma_{ij}^{-1}\expect_Z\left[ Z_j Z_j\right]
                 - \frac{1}{2}\sum_{i=1}^d \Sigma_{ii}^{-1}\expect_Z[Z_i^2] \\
                &= \frac{d}{2} - \frac{1}{2}\sum_{i=1}^d \Sigma_{ii}^{-1} \expect_Z[Z_i^2].
    \end{align*}
    Putting the pieces together now yields the claim.
\end{proof}

\subsection{Finding Chord-Disjoint 2-Changes}\label{sec: construct S}

To proceed, we need to construct an appropriate set~$\S$ of chord-disjoint 2-changes.
We provide an explicit construction of such a set.
In doing so, we need
the following four lemmas to help us characterize when a pair of 2-changes is chord-disjoint.

It is convenient for the remainder of the section to assign an orientation
to~$T$. Pick an arbitrary vertex of~$T$, and walk along~$T$ in an arbitrary
direction. We order the edges of~$T$ according to the order we encounter
them in, labelling the first edge~$e_1$, the second~$e_2$, and so on.
Moreover, we consider these edges \emph{directed}: If~$e$ is incident
to~$u$ and~$v$ and~$u$ is encountered before~$v$, then we write~$e = uv$.

\begin{lemma}\label{lemma:share one chord disjoint}
    If two 2-changes share exactly one tour-edge, then they are chord-disjoint.
\end{lemma}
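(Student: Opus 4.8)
Let me recall the relevant structure. A 2-change $S_T(e,f)$ on the oriented tour $T$ removes two tour-edges $e$ and $f$ and adds the two chord-edges determined by reconnecting. With $T$ oriented, say $e = uv$ and $f = xy$ where $u$ precedes $v$ and $x$ precedes $y$ along $T$. Removing $e$ and $f$ splits $T$ into two paths; the unique way to reconnect into a tour is to add the chords $ux$ and $vy$. (The other pairing $uy$, $vx$ just gives back the original tour up to reversing a segment — actually it gives back $T$ itself, so it is not a 2-change.) So the two chords of $S_T(e,f)$ with $e = uv$, $f = xy$ are $\{u,x\}$ and $\{v,y\}$, the "heads" joined to "heads" and "tails" to "tails" in the orientation.

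Now I want to prove: if two 2-changes $S_T(e,f)$ and $S_T(e,g)$ share exactly the tour-edge $e$ (and $f \neq g$), then they have no chord in common, i.e. they are chord-disjoint.

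Here is the plan.

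**Step 1: Write out the two chord sets explicitly.** Orient everything. Write $e = u v$, $f = x y$, $g = p q$, with the first vertex in each pair preceding the second along $T$. By the description above, the chords of $S_T(e,f)$ are $\{u,x\}$ and $\{v,y\}$, and the chords of $S_T(e,g)$ are $\{u,p\}$ and $\{v,q\}$. (I should double-check the orientation bookkeeping: depending on how one sets things up, one might instead get $\{u,x\}$ and $\{v,y\}$ versus $\{u,y\}$ and $\{v,x\}$; I'll pin down the convention using the paper's definition "$e = uv$ if $u$ is encountered before $v$" and the fact that the reconnection reversing a segment must be the non-trivial one. The key invariant, regardless of convention, is that each chord of a 2-change removing $e = uv$ uses exactly one of $u, v$.)

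**Step 2: Observe that each chord of $S_T(e,\cdot)$ contains exactly one endpoint of $e$.** This is immediate from Step 1: one chord contains $u$ (and a vertex of the other removed edge), the other contains $v$ (and a vertex of the other removed edge). This holds for both 2-changes since both remove $e$.

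**Step 3: Rule out a shared chord by a vertex-counting argument.** Suppose for contradiction that $S_T(e,f)$ and $S_T(e,g)$ share a chord $c$. By Step 2, $c$ contains exactly one endpoint of $e$; say $c = \{u, a\}$ for some $a \notin \{u,v\}$ (the case $c = \{v, a\}$ is symmetric). Since $c$ is a chord of $S_T(e,f)$ and $e = uv$, $f = xy$, the chord through $u$ must be $\{u, x\}$ (the "tail of $f$" under our convention), so $a = x$. Similarly, since $c$ is a chord of $S_T(e,g)$ with $g = pq$, the chord through $u$ is $\{u, p\}$, so $a = p$. Hence $x = p$. But $x$ and $p$ are the first endpoints (in the orientation) of $f$ and $g$ respectively; a tour-edge is determined by either of its endpoints together with the orientation (each vertex of $T$ is the "tail" of exactly one tour-edge). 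So $x = p$ forces $f = g$, contradicting the hypothesis that the two 2-changes share \emph{exactly} the edge $e$. Therefore no shared chord exists, and the two 2-changes are chord-disjoint.

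**Main obstacle.** The only real subtlety is getting the orientation/reconnection bookkeeping exactly right in Step 1 — specifically, confirming that when we remove $e = uv$ and $f = xy$ from the oriented tour, the unique reconnecting pair of chords pairs the two "tails" and the two "heads" (or consistently, one tail with one head), and in particular that each new chord is incident to exactly one endpoint of $e$. Once that structural fact is nailed down, Steps 2–3 are a short combinatorial argument. I'd verify Step 1 by a small picture: deleting two edges of a cycle yields two arcs, and there is exactly one way to glue the four loose ends back into a single cycle other than restoring $T$; tracing which ends get glued gives the claim. Everything after that is essentially the observation that a tour-edge is pinned down by one endpoint plus the orientation, so agreeing on a chord through $u$ forces the partner edges to coincide.
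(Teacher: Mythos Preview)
Your proof is correct and follows essentially the same approach as the paper's: both write out the chord-edges of $S_T(e,f)$ and $S_T(e,g)$ explicitly using the tour orientation, observe that each chord contains exactly one endpoint of $e$, and then argue that a shared chord would force the \emph{same-orientation} endpoints of $f$ and $g$ to coincide (your ``each vertex is the tail of exactly one tour-edge'' is precisely the paper's observation that only $v_2 = v_3$ or $v_1 = v_4$ is possible when $f_1 \neq f_2$). The paper's version is just terser.
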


\begin{proof}
    Let~$S_1 = S(e, f_1)$ and~$S_2 = S(e, f_2)$ be two 2-changes sharing a single tour-edge~$e = u_1 u_2$.
    Let~$f_1 = v_1 v_2$ and~$f_2 = v_3 v_4$. Since~$f_1 \neq f_2$, these edges can share at most one endpoint.
    Moreover, by the orientation of~$T$, we can only have~$v_2 = v_3$ or~$v_1 = v_4$
    
    The chord-edges involved in~$S_1$ are~$u_1 v_1$ and~$u_2 v_2$,
    while the chord-edges involved in~$S_2$ are~$u_1 v_3$ and~$u_2 v_4$.
    Thus,~$S_1$ and~$S_2$ only share a chord-edge if~$v_1 = v_3$ or~$v_2 = v_4$, which is not possible.
\end{proof}

\begin{lemma}\label{lemma:at most one endpoint}
    If two 2-changes have at most one endpoint in common, then they are chord-disjoint.
\end{lemma}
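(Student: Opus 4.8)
The plan is to use the basic structural fact that a 2-change only re-pairs the (at most four) vertices incident to the two tour-edges it removes, so every chord-edge it introduces has \emph{both} of its endpoints among those vertices. First I would recall the reconnection rule already invoked in the proof of \Cref{lemma:share one chord disjoint}: writing a 2-change as $S(e,f)$ with the oriented tour-edges $e = u_1 u_2$ and $f = v_1 v_2$, the two added chord-edges are $u_1 v_1$ and $u_2 v_2$. Consequently, if $W(S)$ denotes the set of endpoints of the two tour-edges removed by $S$ (equivalently, the set of endpoints of the two chord-edges added by $S$), then both chord-edges of $S$ have both of their endpoints lying in $W(S)$, and $|W(S)| \le 4$.

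With this in hand the lemma reduces to a one-line pigeonhole argument, which I would phrase as a contradiction. Suppose $S_1$ and $S_2$ are 2-changes that are \emph{not} chord-disjoint, so that they share a chord-edge $g = xy$ with $x \neq y$. Then $\{x,y\} \subseteq W(S_1)$ and $\{x,y\} \subseteq W(S_2)$, hence $W(S_1) \cap W(S_2) \supseteq \{x,y\}$ has at least two elements; that is, $S_1$ and $S_2$ share at least two endpoints. Taking the contrapositive: if $S_1$ and $S_2$ have at most one endpoint in common, then they share no chord-edge, i.e.\ they are chord-disjoint, which is exactly the claim.

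I do not expect a genuine obstacle here; the entire content is the observation that the chords of $S$ live inside $W(S)$, which is immediate from the reconnection rule. The only points worth stating carefully are that ``endpoint of a 2-change'' refers to a vertex incident to one of the two removed tour-edges (and that this set coincides with the set of vertices touched by the two added chords, so the terminology is unambiguous), and that the argument is unaffected by degenerate configurations — for instance when the two removed tour-edges of some $S$ happen to be adjacent on $T$, so that $|W(S)| = 3$ — since the containment of the chords in $W(S)$ holds in every case.
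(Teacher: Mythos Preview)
Your proof is correct and rests on the same observation as the paper's—that every chord-edge of a 2-change has both endpoints among the (at most four) endpoints of its removed tour-edges. The paper carries this out by explicit case analysis on zero versus one shared endpoint with named vertices, whereas your contrapositive via $W(S_1)\cap W(S_2)$ is a slightly cleaner packaging of the same idea.
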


\begin{proof}
    Let~$S_1$ and~$S_2$ be two 2-changes. Note that by assumption,~$S_1$
    and~$S_2$ cannot share any tour-edges, since then they would have two endpoints in common.
    If the 2-changes have no endpoints in common, then the lemma is obviously true.

    Assume then that the 2-changes have one endpoint in common. Let~$e_1$ and~$f_1$ be removed by~$S_1$, and~$e_1$ and~$f_2$ be removed by~$S_2$.
    Without loss of generality, assume~$e_1$ and~$e_2$ have an endpoint in common. 
    We write~$e_1 = v_1 v_2$ and~$e_2 = v_2 v_3$. Let~$f_1 = u_1 u_2$ and~$f_2 = u_3 u_4$.
    Note that all of these vertices with different labels are distinct.

    The chord-edges added by~$S_1$ are then~$v_1 u_1$ and~$v_2 u_2$, while those
    added by~$S_2$ are~$v_2u_3$ and~$v_3 u_4$. Observe that these are four distinct
    edges, concluding the proof.
\end{proof}

\begin{lemma}\label{lemma:successive chord disjointness}
    Let~$T_1$ and~$T_2$ be two successive sub-paths of a tour~$T$, both containing
    an even number of edges. Any two 2-changes which are each
    formed by removing one edge in~$T_1$ and one
    edge in an even position along~$T_2$ are chord-disjoint.
\end{lemma}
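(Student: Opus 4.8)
The plan is to write down explicitly the two chord-edges added by each of the two 2-changes and to observe that the hypothesis ``even position along $T_2$'' is exactly what prevents any coincidence between chords coming from different 2-changes, unless those 2-changes are identical.

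First I would fix notation compatible with the orientation of $T$. Write the vertices of $T_1$ in order as $x_0, x_1, \ldots, x_{2p}$, so that the $m$-th edge of $T_1$ is $f_m = x_{m-1}x_m$, and write the vertices of $T_2$ (which begins where $T_1$ ends) as $y_0 = x_{2p}, y_1, \ldots, y_{2q}$, so that the $n$-th edge of $T_2$ is $g_n = y_{n-1}y_n$. Since $T_1$ and $T_2$ are successive sub-paths of a tour meeting only at $x_{2p} = y_0$, the vertices $x_0, \ldots, x_{2p}, y_1, \ldots, y_{2q}$ are pairwise distinct, so a chord joining an $x$-vertex to a $y$-vertex is determined by the ordered pair of their indices. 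Let $S_1$ remove $f_{i_1}\in T_1$ and $g_{j_1}\in T_2$ and let $S_2$ remove $f_{i_2}\in T_1$ and $g_{j_2}\in T_2$, where $j_1, j_2$ are even; we may assume $S_1 \neq S_2$, i.e.\ $(i_1,j_1)\neq(i_2,j_2)$. Because $T_1$ precedes $T_2$ in the orientation, in each 2-change the $T_1$-edge is encountered before the $T_2$-edge, so by the convention used in the proof of \Cref{lemma:share one chord disjoint} (tail joined to tail, head joined to head) the chords added by $S_r$ are $x_{i_r-1}\,y_{j_r-1}$ and $x_{i_r}\,y_{j_r}$ for $r\in\{1,2\}$.

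Next I would exploit the parity. Since $j_r$ is even and $j_r\geq 2$, the two chords added by $S_r$ have $T_2$-endpoints $y_{j_r-1}$ (odd index, and $\geq 1$, so not the junction vertex) and $y_{j_r}$ (even index). Two of the four chords above can coincide only if their $T_2$-endpoints agree, hence only if their $T_2$-indices have the same parity; so the only candidate coincidences are $x_{i_1-1}y_{j_1-1} = x_{i_2-1}y_{j_2-1}$ or $x_{i_1}y_{j_1} = x_{i_2}y_{j_2}$. Either equality forces $i_1 = i_2$ and $j_1 = j_2$, contradicting $S_1\neq S_2$. Hence $S_1$ and $S_2$ share no chord, which is the claim.

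There is essentially no hard step here; the only points requiring care are correctly identifying which vertices are joined by the new chords (matching the ``first'' endpoint of one removed edge to the ``first'' endpoint of the other, with respect to the orientation), and recognizing that forcing each $g_{j_r}$ into an even position is precisely what makes the two new chords of a single 2-change attach to $T_2$-vertices of opposite parity — which is what rules out any cross-coincidence between the two 2-changes.
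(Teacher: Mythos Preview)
Your proof is correct. You write down the four chord-edges explicitly and use a parity argument on the $T_2$-index: since $j_r$ is even, each 2-change produces one chord with odd $T_2$-index and one with even $T_2$-index, so cross-coincidences are impossible and same-parity coincidences force $(i_1,j_1)=(i_2,j_2)$. The care you take with the junction vertex $x_{2p}=y_0$ (noting that $j_r\ge 2$ keeps every $y$-endpoint in $\{y_1,\ldots,y_{2q}\}$, disjoint from $\{x_0,\ldots,x_{2p}\}$) is exactly what is needed to make the ``determined by the ordered pair of indices'' claim valid.

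The paper takes a different, more modular route: it does a case split on how many tour-edges the two 2-changes share, and dispatches each case by invoking \Cref{lemma:share one chord disjoint} or \Cref{lemma:at most one endpoint}. The even-position hypothesis enters only to guarantee that when $e_1\neq e_2$ and $f_1\neq f_2$ the two 2-changes share at most one endpoint (the $T_2$-edges, being both in even position, are non-adjacent to each other and to the junction vertex). Your argument is self-contained and makes the role of parity more transparent; the paper's argument is shorter on the page because the bookkeeping has been pushed into the two auxiliary lemmas. Either is a perfectly good proof of the statement.
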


\begin{proof}
    Let~$e_1, e_2 \in T_1$ and~$f_1, f_2 \in T_2$. Consider two distinct 2-changes~$S_1 = S(e_1, f_1)$ and~$S_2 = S(e_2, f_2)$.
    Note that if~$e_1 \neq e_2$ and~$f_1 \neq f_2$, then the edges involved in~$S_1$ share at most one endpoint with the edges involved in~$S_2$ (namely a common endpoint
    between two edges in~$T_1$), and the conclusion follows
    from \Cref{lemma:at most one endpoint}.
    Moreover, if~$e_1 = e_2$ and~$f_1 = f_2$, then~$S_1 = S_2$, so we can ignore this case.

    It remains to consider the case that~$S_1$ and~$S_2$ share exactly one
    tour-edge. The conclusion then follows from \Cref{lemma:share one chord disjoint}.
\end{proof}

\begin{lemma}\label{lemma:chord disjoint ordering}
    Let~$S_1 = S_T(e_1, f_1)$ and~$S_2 = S_T(e_2, f_2)$
    be two 2-changes with the property that
    three edges of~$\{e_1, e_2, f_1, f_2\}$ form a path disjoint
    from the remaining edge. Then~$S_1$ and~$S_2$ are chord-disjoint.
\end{lemma}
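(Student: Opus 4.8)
The plan is to show that the combinatorial restriction in the hypothesis leaves essentially only one way of splitting the four tour-edges between $S_1$ and $S_2$, and then to read off the chord-edges of the two 2-changes directly. First I would fix notation. The hypothesis presupposes that $e_1, e_2, f_1, f_2$ are four distinct edges; I name the three that form a path by its four (distinct) vertices, writing the path edges as $g_1 = v_0 v_1$, $g_2 = v_1 v_2$, $g_3 = v_2 v_3$, and I let $h$ be the fourth edge, which is vertex-disjoint from $\{v_0, v_1, v_2, v_3\}$; write $h = w_1 w_2$.

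Next I would pin down the pairing. The edge $h$ lies in exactly one of the two 2-changes, and since both the hypothesis and the conclusion are symmetric in $S_1$ and $S_2$, I may assume it is $S_2$; thus $S_2$ removes $h$ together with one of $g_1, g_2, g_3$, and $S_1$ removes the remaining two of $g_1, g_2, g_3$. Here I use that a 2-change removes two \emph{non-adjacent} tour-edges: if the two removed edges shared a vertex, that vertex would be isolated, and the only way to close the remaining path back into a Hamiltonian cycle by adding two edges is to re-add the two removed edges, which does not produce a new tour and hence is not a valid 2-change. Among $g_1, g_2, g_3$ the only pair with no common vertex is $\{g_1, g_3\}$ (since $g_1 \cap g_2 = \{v_1\}$ and $g_2 \cap g_3 = \{v_2\}$), so necessarily $S_1 = S_T(g_1, g_3)$ and $S_2 = S_T(g_2, h)$.

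Finally I would compare the chord-edges. Both $\{g_1, g_3\}$ and $\{g_2, h\}$ are pairs of vertex-disjoint tour-edges ($g_1$ and $g_3$ are disjoint by construction, and $g_2$ and $h$ are disjoint because $h$ avoids all of $v_0, v_1, v_2, v_3$). For any 2-change that removes two vertex-disjoint tour-edges $ab$ and $cd$, deleting those edges cuts $T$ into two arcs, each of which has exactly one endpoint in $\{a, b\}$ and one in $\{c, d\}$; closing the two arcs back into a new Hamiltonian cycle then forces both added chords to join a vertex of $\{a, b\}$ to a vertex of $\{c, d\}$. Applying this, every chord-edge of $S_1 = S_T(g_1, g_3)$ has both endpoints in $\{v_0, v_1, v_2, v_3\}$, whereas every chord-edge of $S_2 = S_T(g_2, h)$ has an endpoint in $\{w_1, w_2\}$. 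Since $\{w_1, w_2\}$ is disjoint from $\{v_0, v_1, v_2, v_3\}$, no chord-edge of $S_1$ equals a chord-edge of $S_2$, so $S_1$ and $S_2$ are chord-disjoint.

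The argument is almost entirely bookkeeping, and the only step that needs a little care is the elimination in the second paragraph: recognizing that the non-adjacency requirement for a genuine 2-change rules out every distribution of the four edges except $S_1 = S_T(g_1, g_3)$, $S_2 = S_T(g_2, h)$. Once that single configuration is isolated, the chord comparison goes through without ever having to track where $h$ sits along $T$ relative to the path.
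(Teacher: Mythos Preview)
Your proof is correct and follows essentially the same approach as the paper: one 2-change has both tour-edges on the path (hence all its chord-vertices lie on the path), while the other involves the isolated edge $h$ (hence each of its chord-edges picks up a vertex off the path), forcing chord-disjointness. Your version is more explicit in that you use the non-adjacency constraint to pin down the pairing as $\{g_1,g_3\}$ versus $\{g_2,h\}$, whereas the paper leaves this implicit---but the final comparison of chord-edges is identical.
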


\begin{proof}
    Observe that two edges on the path~$P$ must form a 2-change~$S$
    together. The remaining edge of the path forms a 2-change~$S'$ with some edge~$e'$
    vertex-disjoint from the path. It follows that these 2-changes are chord-disjoint,
    since the chord-edges of~$S'$ both contain vertices not on~$P$
    while the chord-edges of~$S$ only contain vertices of~$P$.
    %For the second case, let~$e = u_1 u_2$,~$e' = u_2 u_3$. 
    %First, let~$f = v_1 v_2$ and~$f' = v_2 v_3$. Thus,
    %we first consider the case~$f < f'$. The chord-edges added by~$S$ are then
    %$u_1 v_1$ and~$u_2 v_2$, while the chord-edges added by~$S'$ are
    %$u_2 v_2$ and~$u_3 v_3$. Therefore,~$S$ and~$S'$ are not chord-disjoint.
    %Since~$f > f'$, we must have~$f' = v_1 v_2$ and~$f = v_2 v_3$, . Then~$S_1$ adds
    %$u_1 v_2$ and~$u_2 v_3$, while~$S_2$ adds~$u_2 v_2$ and~$u_3 v_3$.
    %In this case,~$S$ and~$S'$ share no chord-edges.
\end{proof}

\begin{figure}
    \centering
    \begin{tikzpicture}[scale=1.2]

    % Radius of the circle
    \def\radius{2}
    
    % Number of points
    \def\numPoints{8}
    \def\numPointstwo{17}
    
    % Number of dots
    \def\numDots{17}
    \def\ratio{(\numPointstwo-1)/\numPointstwo}

    % Drawing the circle and coloring segments
    \foreach \i in {1,...,\numPointstwo} {
            \draw[gray!80] ({360/\numPointstwo * (\i - 1)}:\radius) arc ({360/\numPointstwo * (\i - 1)}:{360/\numPointstwo * \i}:\radius);
    }
    
    %% Drawing the circle and coloring segments
    \foreach \i in {1,...,\numPoints} {
        \draw[gray, dotted] (0,0) -- ({\ratio*360/\numPoints * (\i - 1)}:\radius);
         \node[] at (({\ratio*360/\numPoints * (\i + 0.5 - 1)}:1.15*\radius) {$T_{\i}$};
        \ifodd\i
            \draw[very thick, black] ({\ratio*360/\numPoints * (\i - 1)}:\radius) arc ({\ratio*360/\numPoints * (\i - 1)}:{\ratio*360/\numPoints * \i}:\radius);
            \foreach \j in {1,5,...,\numDots} {
        }
        \fi
    }
    \def\i{\numPointstwo}
    \draw[white, dashed]
        ({360/\numPointstwo * (\i - 1)}:\radius) arc ({360/\numPointstwo * (\i - 1)}:{360/\numPointstwo * \i}:\radius);
    
    \foreach \i in {1,...,\numDots} {
        \vertex[] (v\i) at ({360/\numDots * (\i - 1)}:\radius) {};
    }
            
    \end{tikzpicture}
    \caption{
    Colors of the edges in the tour~$T$ at stage~$t = 3$, for~$n = 2^4 + 1$. The dotted lines
    are drawn to show the boundaries of each segment~$T_i$ more clearly. The segments of the tour
    are numbered starting at the right and proceeding counterclockwise. The 2-changes we
    consider in the proof of \Cref{lemma:set chord disjoint} are then the 2-changes formed
    from the red edges in~$T_i$ (drawn black) and the blue edges in~$T_{i+1}$ (drawn gray) that appear
    in even positions along~$T$, for~$i$ odd. 
    Note that the last edge along~$T$ is drawn dashed 
    to indicate that it is not used in the construction of~$\S$.}
    \label{fig:stages}
\end{figure}
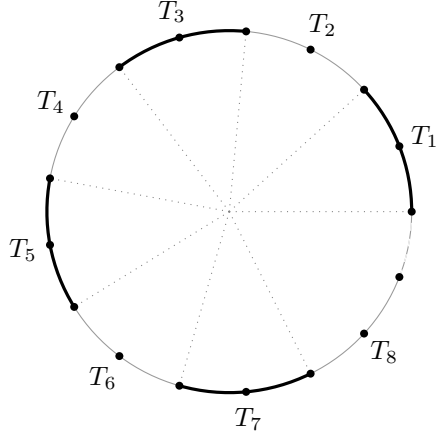

Next, we construct a set~$\mathcal{S}$ of 2-changes such that any
two 2-changes in~$\mathcal{S}$ are chord-disjoint, and most of
the edges of~$T$ participate in many 2-changes in~$\mathcal{S}$.
The construction we provide here works for complete
graphs with~$n = 2^k + 1$ vertices for some integer~$k$.

The construction of~$\mathcal{S}$ proceeds as follows.
Recall that we ordered the edges of~$T$ as~$(e_1, e_2, \ldots, e_n)$.
We define the following process on~$T$, occurring in stages.
At stage~$t$ we divide the tour into~$2^t$ equal segments~$\{T_1, \ldots, T_{2^t}\}$,
starting at~$e_1$.
In each stage, we color all the edges in each odd segment red and
the edges in each even segment blue.
The only exception to this rule is the last edge~$e_n$, which we color black;
this edge is not used to form any 2-changes.
See \Cref{fig:stages} for an illustration at stage~$t = 3$.

At each stage we consider the 2-changes formed by the red edges
in each odd segment~$T_i$ together with the \emph{even} 
blue edges in its successor segment~$T_{i+1}$.
We say that these are the 2-changes \emph{added} in stage~$t$, and denote the set of
these 2-changes by~$\mathcal{S}_t$. We continue this process for~$\log (n-1) - 1$ stages.
Note that in the final stage, each segment contains two colored edges.

\begin{lemma}\label{lemma:set chord disjoint}
    The 2-changes in~$\mathcal{S} =\bigcup_{t=1}^{\log n - 1}\mathcal{S}_t$ are chord-disjoint.
\end{lemma}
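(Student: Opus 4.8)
The plan is to argue that for any two distinct 2-changes $S_1, S_2 \in \mathcal{S}$ they are chord-disjoint, splitting into the case where both come from the same stage $\mathcal{S}_t$ and the case where they come from different stages $\mathcal{S}_{t_1}$, $\mathcal{S}_{t_2}$ with $t_1 < t_2$. The same-stage case is essentially already handled: within stage $t$ every 2-change is formed by one red edge in some odd segment $T_i$ and one even-position blue edge in the successor segment $T_{i+1}$. Two 2-changes drawn from the same pair $(T_i, T_{i+1})$ are chord-disjoint by \Cref{lemma:successive chord disjointness} (take $T_1 = T_i$, $T_2 = T_{i+1}$, both of even length since $n-1$ is a power of $2$ and each segment has $(n-1)/2^t$ edges, which is even for $t \le \log(n-1)-1$). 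Two 2-changes drawn from different successor-pairs in the same stage involve edges lying in disjoint arcs of $T$ (odd segments and their successors are pairwise disjoint across distinct $i$), so the four tour-edges are pairwise non-adjacent except possibly for a shared endpoint only if the arcs abut; in any case they share at most one endpoint, and \Cref{lemma:at most one endpoint} applies.

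For the cross-stage case, I would fix $S_1 \in \mathcal{S}_{t_1}$ and $S_2 \in \mathcal{S}_{t_2}$ with $t_1 < t_2$ and examine how the $2^{t_1}$-segmentation refines into the $2^{t_2}$-segmentation. Write $S_1 = S_T(e,f)$ with $e$ a red edge of an odd $2^{t_1}$-segment $A$ and $f$ an even-position blue edge of its successor $B$; write $S_2 = S_T(e',f')$ with $e'$ red in an odd $2^{t_2}$-segment $A'$ and $f'$ even-position blue in its successor $B'$. The key structural observation is that under the finer segmentation, each coarse segment splits into $2^{t_2-t_1}$ equal pieces, alternately colored at the finer stage; and the parity of an edge's position along $T$ is a global invariant, independent of the stage. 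Using this I want to show that the multiset $\{e, f, e', f'\}$ either consists of four edges sharing at most one endpoint (so \Cref{lemma:at most one endpoint} finishes it), or else three of them form a path disjoint from the fourth (so \Cref{lemma:chord disjoint ordering} finishes it), or two of them coincide as a single tour-edge (so \Cref{lemma:share one chord disjoint} finishes it). The content is a case analysis on the relative location of the arcs $A \cup B$ and $A' \cup B'$: disjoint arcs give the first case trivially; nested or overlapping arcs are where the parity-of-position bookkeeping does real work, because it constrains which fine edges can be chosen and forces the shared-edge or path-of-three configurations.

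The main obstacle I expect is precisely this cross-stage overlap analysis: one must track carefully that a blue edge selected at the coarse stage (even position in its coarse successor segment, hence even global position) and a blue edge selected at a finer stage can only collide in a controlled way, and that a red edge at one stage cannot be simultaneously a blue edge at another, etc. It will be convenient to phrase everything in terms of the global index $j$ of edge $e_j$: the coarse construction selects, in stage $t$, edges $e_j$ with $j$ in the second half of an odd-indexed block of length $(n-1)/2^{t-1}$ split into two halves, further restricted to even $j$ on the blue side. Reducing the whole argument to arithmetic statements about these index sets, and then invoking Lemmas~\ref{lemma:share one chord disjoint}--\ref{lemma:chord disjoint ordering} according to which of the three configurations the four indices realize, is the route I would take; the bookkeeping is the only genuinely delicate part, and the figure (\Cref{fig:stages}) should guide the case enumeration.
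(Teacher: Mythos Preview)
Your proposal is correct and follows essentially the same route as the paper: the same-stage case via \Cref{lemma:successive chord disjointness} (with your extra observation that 2-changes from different segment-pairs share at most one endpoint, which the paper leaves implicit), and the cross-stage case by reducing every configuration to one of \Cref{lemma:share one chord disjoint,lemma:at most one endpoint,lemma:chord disjoint ordering} using the global parity of the blue edges to rule out the remaining adjacency patterns. The paper organizes the cross-stage analysis by the number of vertices shared between $S_1$ and $S_2$ rather than by the overlap of the arcs $A\cup B$ and $A'\cup B'$, but the configurations examined and the parity contradictions invoked are the same.
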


\begin{proof}
    For any two 2-changes in~$\mathcal{S}_t$, chord-disjointness
    follows from \Cref{lemma:successive chord disjointness}.
    It thus remains to show this for some~$S_1 \in \mathcal{S}_{t_1}$ and~$S_2 \in \mathcal{S}_{t_2}$. Assume
    w.l.o.g.\ that~$t_1 < t_2$. In the following, we write~$S_1 = S_T(e_1, f_1)$
    and~$S_2 = S_T(e_2, f_2)$.
    
    We consider the number of shared vertices between~$S_1$ and~$S_2$.
    Since the case of two shared vertices is the hardest to analyze, we consider it last.
    
    By \Cref{lemma:at most one endpoint}, if~$S_1$ and~$S_2$
    share at most one vertex, then the 2-changes are chord-disjoint.
    If~$S_1$ and~$S_2$ share three vertices, then they have
    exactly one edge in common. By \Cref{lemma:share one chord disjoint},
    they are then chord-disjoint.
    
    Next, if the two 2-changes have four vertices in common, then
    either the edges of~$S_1$ and of~$S_2$ form a cycle, or~$S_1 = S_2$.
    The former case is only possible if~$n = 4$, but we assume that~$n$ is odd throughout.
    Thus, assume~$S_1 = S_2$.
    By construction,~$S_1$ and~$S_2$ both remove one red and
    one even blue edge from successive segments. Observe that at any stage~$t \geq 2$,
    if~$e$ and~$f$ participate in the same 2-change,
    then by construction~$e$ and~$f$ are both red in every stage~$t' \leq t$.
    Thus both edges removed in~$S_2$ are
    red in stage~$t_1$. This is impossible, as~$S_1$ removed the same pair
    of edges as~$S_2$ in stage~$t_1$, 
    and every 2-change of~$\S_{t_1}$ removes one red and one blue edge.

    If~$S_1$ and~$S_2$ have three vertices in common, then the edges must form
    a path. This path must be~$e_1 e_2 f_1 f_2$ up to interchanging the indices, since
    the tour-edges of a 2-change cannot share any vertices. By our chosen orientation~$e_1$ must be red in stage~$t_1$ and~$e_2$ must be red in stage~$t_2$.
    Then~$f_1$ and~$f_2$ are both blue in these respective stages. But blue edges
    are only used in a 2-change when they are even, and~$f_1$ and~$f_2$ cannot
    both be even: a contradiction.

    Lastly, assume that~$S_1$ and~$S_2$ share two vertices. If these vertices
    are both
    incident to the
    same edge in both 2-changes, then~$S_1$ and~$S_2$ share exactly one edge,
    and so once again by \Cref{lemma:share one chord disjoint}~$S_1$ and~$S_2$ are chord-disjoint.
    Thus, there are two cases: either~$e_1$ and~$e_2$ share an endpoint,
    as do~$f_1$ and~$f_2$; or three edges among~$\{e_1, f_1, e_2, f_2\}$
    lie on a path disjoint from the last edge. The second case
    follows directly from \Cref{lemma:chord disjoint ordering}.

    In the first case, assume w.l.o.g.\ that~$e_1$ comes before~$e_2$ in~$T$
    and that~$e_1$ is red in stage~$t_1$. Then~$f_1$ is blue in stage~$t_1$,
    and thus~$f_1$ is an even edge. Since~$e_2$ directly follows~$e_1$ and
    we assume~$f_1$ and~$f_2$ share a vertex, we also know that~$e_2$ is red in stage~$t_2$ and~$f_2$ is blue in stage~$t_2$.
    But in constructing~$\S$ blue edges are only used in 2-changes
    when they occur in even positions along~$T$, and~$f_1$ and~$f_2$ cannot both be even;
    a contradiction.
    Note that this part of the analysis does not use the fact that~$t_1 < t_2$, and thus if we
    interchange the indices, the same reasoning goes through. Hence, this case
    can be excluded.

    This concludes the case analysis.
    We have thus shown that two 2-changes from two different stages must be
    chord-disjoint, and therefore~$\mathcal{S}$ consists only of chord-disjoint 2-changes.
\end{proof}

We now determine how often each edge of~$T$ is used in~$\S$.

\begin{lemma}\label{lemma:counting edge occurrences}
    For~$\S$ as constructed above, we have~$\{k_e(\S)\}_{e \in T} = {\{0, 1, 2,\ldots, n-3\}}$.
    Moreover, there are two edges with~$k_e(\S) = 0$ and two edges
    with~$k_e(\S) = \frac{n-1}{2}$.
\end{lemma}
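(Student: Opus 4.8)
The plan is to compute $k_e(\S)$ explicitly for every tour-edge by tracking, stage by stage, how many 2-changes each edge participates in. Recall $n-1 = 2^k$, fix the orientation and the labelling $e_1,\dots,e_n$, and note immediately that the black edge $e_n$ lies in no segment, so $k_{e_n}(\S)=0$. For the remaining edges I would parametrise by $j\in\{1,\dots,2^k\}$ and work with the $k$-bit binary expansion $j-1 = (b_{k-1}\cdots b_0)_2$.

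The first step is to determine, for each stage $t\in\{1,\dots,k-1\}$, which segment contains $e_j$ and its colour. At stage $t$ the segments have length $2^{k-t}$, so $e_j$ lies in segment $T_m$ with $m=\lceil j/2^{k-t}\rceil$, and $m$ is odd — i.e.\ $e_j$ is red — precisely when $\lfloor (j-1)/2^{k-t}\rfloor$ is even, that is, when $b_{k-t}=0$. Next I would read off the contribution of stage $t$ to $k_{e_j}$. If $e_j$ is red at stage $t$ (so $b_{k-t}=0$) it forms a 2-change with every even-positioned blue edge of the successor segment $T_{m+1}$; since $t\le k-1$ forces $2^{k-t}$ to be even, $T_{m+1}$ contains exactly $2^{k-t-1}$ such edges, so the contribution is $2^{k-t-1}$. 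If $e_j$ is blue at stage $t$ (so $b_{k-t}=1$) and $j$ is even (so $b_0=1$, i.e.\ $e_j$ itself sits at an even position), it pairs with all $2^{k-t}$ red edges of $T_{m-1}$, contributing $2^{k-t}$. Finally, if $e_j$ is blue at stage $t$ and $j$ is odd, it contributes $0$, since odd-positioned blue edges are never used.

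Summing over $t$ and substituting $s=k-t$ then yields closed forms. Using $\sum_{s=1}^{k-1}2^{s-1}=2^{k-1}-1$ and $\sum_{s=1}^{k-1}b_s 2^{s-1}=\lfloor (j-1)/2\rfloor$, one obtains $k_{e_j}=(2^{k-1}-1)-\lfloor (j-1)/2\rfloor$ when $j$ is odd and $k_{e_j}=(2^{k-1}-1)+\lfloor (j-1)/2\rfloor$ when $j$ is even. As $j$ runs over the odd, resp.\ even, elements of $\{1,\dots,2^k\}$, these expressions run over $\{0,1,\dots,2^{k-1}-1\}$, resp.\ $\{2^{k-1}-1,\dots,2^k-2\}$, taking each value exactly once. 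Combining these two ranges, together with the extra $0$ contributed by $e_n$, gives the set equality $\{k_e(\S)\}_{e\in T}=\{0,1,\dots,n-3\}$; moreover the two ranges overlap in a single value and the value $0$ is hit once more by $e_n$, so exactly two edges attain $0$ and exactly two attain that overlap value, which (after substituting the relevant $j$) is the ``moreover'' part.

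I do not expect a real obstacle here: the argument is essentially bookkeeping. The points that need care are (i) excluding $e_n$ consistently throughout; (ii) checking that a blue segment of length $2^{k-t}$ with $t\le k-1$ contains exactly $2^{k-t-1}$ even-positioned edges — this is precisely where the bound $t\le k-1$ (equivalently, ``two coloured edges per segment in the final stage'') is used, and it would fail at a hypothetical stage $t=k$; and (iii) handling the $j$-parity case split consistently with the chosen orientation of $T$. A quick sanity check on $n=9$, where $\S$ contains $12$ 2-changes and the multiset of edge multiplicities works out to $\{0,0,1,2,3,3,4,5,6\}$, would confirm the formulas before writing up.
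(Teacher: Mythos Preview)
Your proposal is correct and follows essentially the same route as the paper: both arguments determine, stage by stage, whether an edge is red or blue via the binary expansion of its position, read off the per-stage contribution to $k_e$, and sum to obtain the decimal-expansion formulas giving the ranges $\{0,\dots,(n-1)/2-1\}$ for odd edges and $\{(n-1)/2-1,\dots,n-3\}$ for even edges. The only cosmetic difference is that the paper organises the same bookkeeping via an auxiliary labelled binary tree, whereas you work directly with the bits of $j-1$; note also that your computation (and the paper's) shows the repeated nonzero value is $(n-1)/2-1$, matching your $n=9$ sanity check, so the ``$\frac{n-1}{2}$'' in the lemma statement is an off-by-one.
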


\begin{proof}
    We maintain the same order on the edges of~$T$
    as used in the construction of~$\S$. Using this order, we call an edge
    even if it appears in an even position in this order, and odd otherwise.
    Note that the last edge~$e_n$ is not considered in the construction of~$\S$,
    and so~$k_{e_n} = 0$.
    For the remainder of the proof, we consider~$T$ with~$e_n$ removed.

    For convenience, denote by~$k_e^t$ the number of 2-changes that~$e$ participates in at stage~$t$,
    so that~$k_e = \sum_{t=1}^{\log n - 1} k_e^t$.
    We observe the following property of~$\S$.
    
    \begin{fact}
        Consider stage~$t$ in the construction of~$\mathcal{S}$. If an edge~$e$ is red in stage~$t$,
        then~$k_e^t = n/2^{t+1}$. If~$e$ is blue, then~$k_e^t = n/2^t$ if~$e$ is even, and~$k_e^t = 0$ otherwise.
    \end{fact}

    For any edge of~$T$, we can count the number of 2-changes of~$\mathcal{S}$
    it participates in as follows.
    Construct a rooted directed binary tree~$H$, where the nodes of~$H$
    are sub-paths of~$T$. The root of~$H$ is~$T$ itself, while the children of
    any node~$P$ of~$H$ are the first and the second halves of the edges in~$P$
    under the ordering of the edges of~$T$, labelled~$P_1$ and~$P_2$ respectively.
    Thus, the children of~$T$ are~$P_1 = \{e_1, \ldots, e_{{(n-1)}/2}\}$ and~$P_2 = \{e_{(n-1)/2+1}, \ldots, e_{n-1}\}$.
    We moreover label the arcs of~$H$. If~$P_1$ and~$P_2$ are the children of a node~$P$
    as described above, then the arc~$a_1 = (P, P_1)$ gets a label~$L(a_1) = 1$,
    while~$a_2 = (P, P_2)$ gets a label~$L(a_2) = 0$.

    We construct~$H$ in this way from the root, continuing until~$H$ has 
    depth~$\log (n-1) - 1$.
    (We define the depth of a node~$v$ in a rooted directed tree as the number of arcs 
    in the directed path from the root to~$v$. The depth of the tree itself is the 
    largest depth among all nodes of the tree.)
    Then the nodes at depth~$t$ of~$H$ are the parts into which~$T$ is partitioned
    at stage~$t$. Note that the leaves of~$H$ each contain two successive edges of~$T$.

    Let~$a = (P, Q)$ be an arc in~$H$ with~$L(a) = 1$.
    From the construction of~$\S$, it follows
    that any edge~$e \in Q$ is colored red in the stage corresponding to the depth of~$Q$.
    Conversely, if~$L(a) = 0$, then~$Q$ is colored
    blue in this stage.

    For each leaf~$P$ of~$H$, we then consider the path~$P(v) = T a_1 P_1 a_2\ldots a_{\log (n-1) - 1} Q$
    from the root to this leaf.
    Following this path, we collect the labels of the arcs along this path into a
    string~$x(Q)$, so~$x(Q) = L(a_1) L(a_2) \ldots L(a_{\log (n-1) - 1})$.
    For~$e \in Q$, we set~$x_e = x(Q)$. 

    Given~$x_e$ for some edge~$e \in T$, we know that~$e$ is red in stage~$t$
    if and only if~$x_e(t) = 1$. We thus know exactly in which stages~$e$ is colored red, and in which it is colored blue.
    Using this information together with the fact above,
    we can derive formulae for~$k_e$ for any~$e \in T$. 
    There are two distinct cases.
    
    \begin{description}
        \item {\bfseries Case 1:~$e$ odd.} 
            Since~$e$ is odd, it only participates in any 2-change when it is colored red. 
            Thus, stage~$t$ contributes~$x_e(t) \cdot (n-1)/2^{t+1}$ 2-changes, and so we count
            \begin{align*}
                k_e &= (n-1)\sum_{t=1}^{\log (n-1) - 1} x_e(t) \cdot \frac{1}{2^{t+1}}
                 = \sum_{t=1}^{\log (n-1) - 1} x_e(t) \cdot 2^{\log (n-1) - 1 -t}\\
                 &= \sum_{j=0}^{\log (n-1) - 2} x_e(\log (n-1) - 1 - j)\cdot 2^{j}.
            \end{align*}
            Note that for a given bit string~$x_e$, this is simply the decimal expansion of
            the binary number represented by~$x_e$.
            
            Since every bit string of length~$\log (n-1) - 1$ is present for some
            odd edge (there are~$2^{\log (n-1) -1}$ leaves in~$H$ and each leaf corresponds to a 
            distinct string), we find that~$\{k_e \mid \text{odd } e \in T\}
            = \{0, 1, \ldots, (n-1)/2-1\}$. 
        \item {\bfseries Case 2:~$e$ even.} An even edge contributes~$(n-1)/2^{t+1}$ 2-changes at stage~$t$
            if it is red, and~$(n-1)/2^t$ 2-changes if it is blue. Thus,
            the contribution at this stage is~$(n-1)/2^{t} - x_e(t) \cdot (n-1) / 2^{t+1}$, and so
            we have
            \begin{align*}
                k_e &= (n-1) \left(\sum_{t=1}^{\log (n-1) - 1} 2^{-t} - \sum_{t=2}^{\log (n-1) - 1}
                    x_e(t)\cdot \frac{1}{2^{t+1}}\right) \\
                    &= n - 3 - 
                       \frac{n-1}{2} \cdot \sum_{t=1}^{\log (n-1) - 1} x_e(t)\cdot \frac{1}{2^{t}} \\
                    &= n - 3 - \sum_{j=0}^{\log (n-1) - 2} x_e(\log (n-1) - 1 - j)\cdot 2^j.
            \end{align*}
            As in the previous case, we recognize here the decimal expansion
            of~$x_e$ in the second term. Thus,~$\{k_e \mid \text{even } e \in T\} =
            \{(n-1)/2-1, (n-1)/2, \ldots, n-2, n-3\}$.
            \qedhere
    \end{description}
\end{proof}

\subsection{Putting the Pieces Together}

We return to the positive orthant probability
by constructing an inverse covariance matrix~$\Sigma^{-1}$ corresponding to~$\S$
according to \Cref{lemma: rewrite to orthant probability}.
In the following, we sort the edges of~$T$ by decreasing value of~$k_e(\S)$.

Observe that the first~$d = (n-3)/2$
edges of~$T$ in this order each form 2-changes with one another in~$\S$.
Note that~$d$ is an integer, as~$n = 2^k + 1$ is odd.
The entries of~$\Sigma^{-1}$ corresponding to these 2-changes are~$\Sigma_{ij}^{-1} = 1/\sqrt{k_i k_j} \geq \frac{1}{n-3}$.
Thus, the inverse covariance
matrix~$\Sigma^{-1}$ constructed from~$\S$ is upper-left triangular,
except with ones on the diagonal. We can then write it in block form,
\[
    \Sigma^{-1} = \begin{pmatrix}
        \tilde{\Sigma}^{-1} & A \\
        A^T & I
    \end{pmatrix},
\]
where the diagonal entries of~$\tilde{\Sigma}^{-1} \in \mathbb{R}^{d \times d}$
are each 1, and the off-diagonal entries
each satisfy~$\tilde{\Sigma}^{-1}_{ij} = \geq \frac{1}{n-3} = \frac{1}{2d}$.
We consider the matrix~$\hat{\Sigma}^{-1}$, which is identical to~$\tilde{\Sigma}^{-1}$, but
with the off-diagonal entries replaced by~$\frac{1}{2d}$.

Comparing~$\hat{\Sigma}^{-1}$ to \Cref{lemma: rewrite to orthant probability},
we proceed to bound the positive orthant probability associated with~$\hat{\Sigma}^{-1}$.
A trivial bound for this probability is~$2^{-d}$.
\Cref{lemma: reduced orthant probability} therefore represents a non-trivial,
if modest, improvement.

\begin{lemma}\label{lemma: reduced orthant probability}
    Let~$X$ be distributed according to~$\mathcal{N}_d(0, \hat{\Sigma})$,
    where~$\hat \Sigma^{-1} \in \real^{d \times d}$ has unit diagonal entries
    and off-diagonal entries~$\frac{1}{2d}$.
    The positive orthant probability of~$X$ is
    bounded from above by~$O\left(2^{-d}e^{-\frac{2d}{9\pi}}\right)$.
\end{lemma}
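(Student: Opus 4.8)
Following \Cref{thm: orthant expectation}, the plan is to push the specific matrix $\hat\Sigma$ through that general bound and then evaluate the resulting conditional expectation with the explicit truncated second-moment formula of \Cref{thm: manjunath and wilhelm}. Since $\hat\Sigma^{-1}$ has unit diagonal entries, the second form of \Cref{thm: orthant expectation} applies directly and gives
\[
    \prob(X \in \real_+^d) \le 2^{-d+1}e^{-d/2}\exp\left(\tfrac12\,\expect\left[\|X\|_2^2 \mid \real_+^d\right]\right).
\]
So it suffices to show that conditioning on the positive orthant shrinks the expected squared norm by a constant \emph{fraction}, i.e.\ $\expect[\|X\|_2^2 \mid \real_+^d] \le (1-c)d + O(1)$ for a fixed $c>0$; substituting this in and absorbing the $O(1)$ into the hidden constant then yields a bound $O(2^{-d}e^{-cd/2})$, which is the assertion of the lemma (and already improves on the trivial $2^{-d}$).

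Write $Z$ for $X$ conditioned on $\real_+^d$, i.e.\ the normal with covariance $\hat\Sigma$ truncated from below at $0$ in every coordinate. As $\hat\Sigma^{-1}$, hence $\hat\Sigma$, is invariant under permutations of the coordinates, all the second moments $\expect[Z_i^2]$ coincide, so $\expect[\|X\|_2^2 \mid \real_+^d] = d\cdot\expect[Z_1^2]$ and it is enough to bound one truncated second moment. To apply \Cref{thm: manjunath and wilhelm} I first compute $\hat\Sigma$ itself: writing $\hat\Sigma^{-1} = (1-\tfrac1{2d})I + \tfrac1{2d}\mathbf 1\mathbf 1^{T}$ and using the Sherman--Morrison identity gives $\hat\Sigma = \tfrac{1}{1-1/(2d)}\bigl(I - \tfrac{1/(2d)}{3/2-1/(2d)}\mathbf 1\mathbf 1^{T}\bigr)$; hence its diagonal entries are $\sigma_{ii} = 1 + O(1/d)$, its off-diagonal entries are $\sigma_{ij} = -\tfrac1{3d} + O(1/d^{2})$, and the bivariate marginal of any pair $(X_k,X_q)$ has density $F_{kq}(0,0) = \bigl(2\pi\sqrt{\sigma_{kk}\sigma_{qq}-\sigma_{kq}^{2}}\,\bigr)^{-1} = \tfrac{1}{2\pi} + O(1/d)$ at the origin.

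Now substitute these into $\expect[Z_1^2] = \sigma_{11} + \sum_{k}\sum_{q\ne k}\sigma_{1k}\bigl(\sigma_{1q} - \sigma_{kq}\sigma_{1k}/\sigma_{kk}\bigr)F_{kq}(0,0)$ and split the double sum by whether $k=1$, $q=1$, or $1\notin\{k,q\}$. The $d-1$ terms with $k=1$ vanish identically (the bracket becomes $\sigma_{1q}-\sigma_{1q}=0$). The $d-1$ terms with $q=1$ each equal $\sigma_{1k}\bigl(\sigma_{11}-\sigma_{k1}\sigma_{1k}/\sigma_{kk}\bigr)F_{k1}(0,0) = -\tfrac{1}{3d}\,(1+O(1/d))\,\tfrac{1}{2\pi}$, so this group sums to a negative constant plus $O(1/d)$. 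The $(d-1)(d-2)$ terms with $1\notin\{k,q\}$ each equal $\sigma_{1k}\bigl(\sigma_{1q}-\sigma_{kq}\sigma_{1k}/\sigma_{kk}\bigr)F_{kq}(0,0) = \tfrac{1}{9d^{2}}\,(1+O(1/d))\,\tfrac{1}{2\pi}$, summing to a smaller positive constant plus $O(1/d)$. Collecting the three groups, all the $O(1/d)$ corrections aggregate to $O(1/d)$, leaving $\expect[Z_1^2] = 1 - c + O(1/d)$ with $c>0$ an explicit constant read off from the calculation; multiplying by $d$ and substituting into the first display gives $\prob(\real_+^d) = O\bigl(2^{-d}e^{-cd/2}\bigr)$, where carrying the arithmetic through pins down the exponent appearing in the lemma. (As a cross-check, \Cref{thm: amemiya} applied to $\hat\Sigma^{-1}$ gives the identity $\expect[Z_1^2] = 1 - \tfrac{d-1}{2d}\,\expect[Z_1Z_2]$, so the same bound follows from any matching lower bound on the cross-moment $\expect[Z_1Z_2]$.)

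The main obstacle is the evaluation in the third paragraph: the Manjunath--Wilhelm double sum has $\Theta(d^{2})$ nonzero terms, of individual sizes $\Theta(1/d)$ and $\Theta(1/d^{2})$, and one must track the near-cancellations carefully enough both to certify that the net leading-order contribution is a negative constant of the right magnitude, and to confirm that the $O(1/d)$ errors in $\sigma_{ii}$, $\sigma_{ij}$ and $F_{kq}(0,0)$ really do aggregate to only $O(1)$ over the $d$ coordinates. The reduction through \Cref{thm: orthant expectation} and the permutation-symmetry step are routine by comparison.
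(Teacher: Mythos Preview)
Your overall architecture is exactly the paper's: apply \Cref{thm: orthant expectation} with unit diagonal, use permutation symmetry to reduce to a single $\expect[Z_1^2]$, evaluate that via \Cref{thm: manjunath and wilhelm}, compute $\hat\Sigma$ by Sherman--Morrison, and split the double sum into the three cases $k=1$, $q=1$, and $1\notin\{k,q\}$. The case analysis and the asymptotics for $\sigma_{ii},\sigma_{ij}$ are also the same.

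The genuine gap is your treatment of $F_{kq}(0,0)$. In \Cref{thm: manjunath and wilhelm} the random vector $X$ is already the \emph{truncated} normal, and $F_{kq}$ is the joint marginal of that truncated distribution, not of the original $\mathcal N_d(0,\hat\Sigma)$. Your formula $F_{kq}(0,0)=(2\pi\sqrt{\sigma_{kk}\sigma_{qq}-\sigma_{kq}^2})^{-1}=\tfrac{1}{2\pi}+O(1/d)$ is the \emph{untruncated} bivariate density at the origin; it is simply not the quantity appearing in the theorem. The correct object is
\[
F_{kq}(0,0)=\frac{1}{\prob(\real_+^d)}\cdot\frac{1}{(2\pi)^{d/2}\sqrt{\det\hat\Sigma}}\int_{\real_+^{d-2}}\exp\Bigl(-\tfrac12 x^T\hat\Sigma^{-1}_{(kq)}x\Bigr)\dd x,
\]
and the paper handles it by comparing this integral against $\prob(\real_+^d)$ written as a $d$-fold integral, obtaining the lower bound $F_{kq}(0,0)\ge 2/\pi$. (Even in the toy case $\hat\Sigma=I$ the truncated bivariate marginal at the origin is $(\sqrt{2/\pi})^2=2/\pi$, not $1/(2\pi)$.) Because the net sum $\sum_{k}\sum_{q\neq k} g_{kq}$ is negative, one needs a \emph{lower} bound on $F_{kq}(0,0)$ to upper-bound $\expect[Z_1^2]$; your value $\tfrac{1}{2\pi}$, even if you had presented it as a lower bound, is off by a factor of~$4$, so ``carrying the arithmetic through'' would yield $\expect[Z_1^2]\le 1-\tfrac{1}{9\pi}+O(1/d)$ and hence only $O(2^{-d}e^{-d/(18\pi)})$, not the stated $O(2^{-d}e^{-2d/(9\pi)})$. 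The fix is precisely the integral comparison the paper carries out; nothing else in your argument needs to change.
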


\begin{proof}
    It suffices to bound~$\expect[\|X\|^2 \given \real_+^d]$ and use
    \Cref{thm: orthant expectation}.
    By symmetry of~$\hat \Sigma^{-1}$ the marginal densities of the entries of~$X$
    are all identical. Hence,
    \[
        \expect[\|X\|^2 \given \real_+^d] = d \cdot \expect[X_1^2 \given \real_+^d].
    \]
    To compute this expected value we 
    use \Cref{thm: manjunath and wilhelm}, which yields
    \[
        \expect[X_i^2 \given \real_+^d] = \sigma_{ii} + \sum_{k=1}^d \sum_{q \neq k} \sigma_{ik}
            \left(
                \sigma_{iq} - \frac{\sigma_{kq}\sigma_{ik}}{\sigma_{kk}}
            \right)F_{kq}(0, 0),
    \]
    where~$\sigma_{ij}$ denotes the~$i,j$-entry of~$\hat{\Sigma}$, and~$F_{kq}(x, y)$ is the joint marginal
    distribution of~$(X_k, X_q)$ conditional on~$X \in \real_+^d$. 

    The marginal distributions are given by a rather compact expression, since we
    only evaluate them at the origin. Using the conditional density of~$X$ as given in
    \Cref{eq: truncated normal},
    \[
        F_{kq}(0, 0) = \frac{1}{(2\pi)^{d/2}\sqrt{\det{\hat \Sigma}}} \cdot \frac{1}{\prob(\real_+^d)}\cdot
        \int_{\real_+^{d-2}} 
            \exp\left(
                -\frac{1}{2}x^T \hat\Sigma_{(kq)}^{-1} x
            \right) \dd x,
    \]
    where~$\hat\Sigma^{-1}_{(kq)}$ is simply~$\hat \Sigma^{-1}$ with the~$k^\text{th}$ and~$q^\text{th}$ rows
    and columns removed.
    
    This expression can be simplified further as follows.
    For arbitrary~$m \in \mathbb{N}$, 
    let~$\Sigma^{-1}_m$ be the~$m \times m$ matrix with unit diagonal and
    off-diagonal entries~$\frac{1}{2d}$.
    Then~$\hat{\Sigma}^{-1} = \Sigma^{-1}_d$ and~$\Sigma_{(kq)}^{-1} = \Sigma_{d-2}^{-1}$.
    Hence, inserting the expression for~$\prob(\real_+^d)$ and cancelling like terms,
    we obtain
    \begin{align*}
        F_{kq}(0, 0) &= 
        \frac{\int_{\real_+^{d-2}} \exp\left(-\frac{1}{2}x^T \Sigma^{-1}_{d-2}x\right)\dd x}
            {\int_{\real_+^d} \exp\left(-\frac{1}{2}x^T \Sigma^{-1}_d x\right)}
                \geq 
        \frac{\int_{\real_+^{d-2}} \exp\left(-\frac{1}{2}x^T \Sigma^{-1}_{d-2}x\right)\dd x}
            {\int_{\real_+^{d-2}} \exp\left(-\frac{1}{2}x^T \Sigma^{-1}_{d-2} x\right)
                \left(\int_0^\infty e^{-\frac{1}{2}x^2} \dd x\right)^2}\\
             &= \frac{2}{\pi}.
    \end{align*}

    Next, we need to compute the entries of~$\hat \Sigma$ and the determinant~$\det \Sigma_{d}^{-1}$.
    Note that~$\Sigma_d^{-1} = D + \frac{1}{2d}ee^T$, where~$D = \left(1 - \frac{1}{2d}\right)I$
    and~$e$ is the all-1 column vector.
    It can then be straightforwardly verified from the Sherman-Morrison formula that
    \begin{align*}
        \det{\Sigma_d^{-1}} = \frac{3d - 1}{2d - 1} \left(\frac{2d - 1}{2d}\right)^d 
        \quad \text{and} \quad
        \sigma_{ij} = \frac{2d}{2d-1} \cdot \begin{cases}
                1 - \frac{1}{3d-1}, & \text{if~$i = j$}, \\
                -\frac{1}{3d - 1}, & \text{otherwise}.
        \end{cases}
    \end{align*}
    We omit the details of the calculations as they are routine. 

    Let~$g_{kq} = \sigma_{1k}\left(\sigma_{1q} - \frac{\sigma_{kq}\sigma_{1k}}{\sigma_{kk}}\right)$.
    We now compute~$\sum_{k=1}^d \sum_{k \neq q} g_{kq}$. There are three values that~$g_{kq}$ takes:
    \begin{description}
        \item[Case 1:~$k \neq 1$,~$q \neq 1$.]  Then~$g_{kq} = \left(\frac{2d}{2d-1}\right)^2
                \left(\frac{1}{3d-1}\right)^2\left(1 + \frac{1}{3d-2}\right)$.
        \item[Case 2:~$k = 1$.] Then~$g_{kq} = \sigma_{1q} - \sigma_{1q}\sigma_{11}/\sigma_{11} = 0$.
        \item[Case 3:~$q = 1$.] Then~$g_{kq} = -\left(\frac{2d}{2d-1}\right)^2 \cdot \frac{1}{3d-1}
            \cdot \left(1 - \frac{1}{3d-1} - \frac{1}{3d-2}\right)$. Simplifying slightly,
            we have~$g_{kq} \leq -\left(\frac{2d}{2d-1}\right)^2 \cdot \frac{1}{3d-1}
            \cdot \left(1 - \frac{2}{3d-2}\right)$.
    \end{description}
    In computing the sum over~$k$ and~$q$ we find~$(d-1)(d-2)$ terms corresponding to
    case Case 1,~$d-1$ terms corresponding to Case 2,
    and~$d-1$ terms corresponding to Case 3. Thus, we have
    \begin{align*}
        \sum_{k=1}^d \sum_{q\neq k} g_{kq} \leq\, & (d-1)(d-2) \cdot 
            \left(\frac{2d}{2d-1}\right)^2 \cdot 
                    \left(\frac{1}{3d-1}\right)^2\left(1 + \frac{1}{3d-2}\right) \\
           &- (d-1) \cdot \left(\frac{2d}{2d-1}\right)^2 \cdot \frac{1}{3d-1}
            \cdot \left(1 - \frac{2}{3d-2}\right).
    \end{align*}
    Rearranging, simplifying, and grouping negligible terms, we find
    \begin{align*}
        \sum_{k=1}^d \sum_{q\neq k} g_{kq} \leq - \frac{2}{9} + O\left(\frac{1}{d^2}\right).
    \end{align*}
    
    Plugging all of the above into the formula for~$\expect[X_1^2 \given \real_+^d]$ results in
    \begin{align*}
        \expect[X_1^2\given \real_+^d] \leq \sigma_{11} - \frac{2}{\pi}
            \left[
               \frac{2}{9} - O\left(\frac{1}{d^2}\right)
            \right] = 
            1 - \frac{4}{9\pi} + O\left(\frac{1}{d}\right).
    \end{align*}
    Using \Cref{thm: orthant expectation}, we then finally obtain
    \[
        \prob(\real_+^d) \leq 2^{-d+1} e^{-d/2} 
            e^{\frac{1}{2}\left(O(1) + d - \frac{4d}{9\pi}\right)}
            = O\left(2^{-d}e^{-\frac{2d}{9\pi}}\right)
    \]
    as claimed.
\end{proof}

The following is now an immediate consequence of
\Cref{lemma: rewrite to orthant probability,lemma: reduced orthant probability},
recalling that~$d = (n-3)/2$.

\begin{lemma}\label{lemma: Eg bound}
    For~$\S$ as constructed above,~$\G(\S) = O\left(e^{-\frac{n}{9\pi}}\right)$.
\end{lemma}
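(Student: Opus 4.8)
The plan is to assemble the lemma from \Cref{lemma: rewrite to orthant probability} and \Cref{lemma: reduced orthant probability}, exactly as advertised just before the statement. First I would invoke \Cref{lemma: rewrite to orthant probability} with the index set~$I$ taken to be the~$d = (n-3)/2$ tour-edges having the largest values of~$k_e(\S)$. As recorded in the discussion above, any two of these edges participate together in a 2-change of~$\S$, so the off-diagonal entries of the associated matrix are of the form~$1/\sqrt{k_i k_j}$; and since~$k_e(\S) \leq n-3 = 2d$ for every edge by \Cref{lemma:counting edge occurrences}, we have~$1/\sqrt{k_i k_j} \geq 1/(2d)$, so the hypothesis~$s_{ij} \leq 1/\sqrt{k_i k_j}$ of \Cref{lemma: rewrite to orthant probability} is satisfied by the choice~$s_{ij} = 1/(2d)$. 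With this choice the inverse covariance matrix the lemma produces is precisely the matrix~$\hat\Sigma^{-1}$ of \Cref{lemma: reduced orthant probability}, and we obtain
\[
    \G(\S) \leq 2^{d} \sqrt{\det \hat\Sigma}\; \prob(\real_+^d), \qquad X \sim \mathcal{N}_d(0, \hat\Sigma).
\]

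Next I would dispose of the determinant prefactor. The Sherman--Morrison computation carried out inside the proof of \Cref{lemma: reduced orthant probability} gives~$\det \hat\Sigma^{-1} = \frac{3d-1}{2d-1}\left(\frac{2d-1}{2d}\right)^{d}$, which is strictly positive for every~$d \geq 1$ and converges to a positive constant as~$d \to \infty$; in particular it is bounded away from~$0$. Hence~$\sqrt{\det \hat\Sigma} = (\det \hat\Sigma^{-1})^{-1/2} = O(1)$. Substituting the bound~$\prob(\real_+^d) = O\!\left(2^{-d} e^{-2d/(9\pi)}\right)$ from \Cref{lemma: reduced orthant probability} into the display above, the factors~$2^{d}$ and~$2^{-d}$ cancel, leaving~$\G(\S) = O\!\left(e^{-2d/(9\pi)}\right)$. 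Finally, plugging in~$d = (n-3)/2$ gives~$e^{-2d/(9\pi)} = e^{-(n-3)/(9\pi)} = e^{1/(3\pi)} \cdot e^{-n/(9\pi)} = O\!\left(e^{-n/(9\pi)}\right)$, which is the claim.

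I do not expect a genuine obstacle here: the statement really is a corollary. The only points that need care are the two bookkeeping steps above. The first is that the~$d$ edges of largest~$k_e(\S)$ do pairwise form 2-changes of~$\S$, so that the off-diagonal sparsity pattern matches the dense top-left block that \Cref{lemma: rewrite to orthant probability} needs; this is the structural observation already isolated in the preceding paragraph. The second is that~$\sqrt{\det \hat\Sigma}$ remains~$O(1)$ rather than growing exponentially --- were it of size~$c^{d}$ with~$c > 1$ it could in principle overwhelm the~$e^{-2d/(9\pi)}$ gain of \Cref{lemma: reduced orthant probability} --- and this is settled by the closed form for~$\det \hat\Sigma^{-1}$ already derived there. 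Once both are in hand, the proof is a one-line combination of the two lemmas.
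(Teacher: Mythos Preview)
Your proposal is correct and follows exactly the approach of the paper's own proof: apply \Cref{lemma: rewrite to orthant probability} with $I$ the $d=(n-3)/2$ edges of largest $k_e(\S)$ and $s_{ij}=1/(2d)$, feed in the orthant bound from \Cref{lemma: reduced orthant probability}, and observe that $\sqrt{\det\hat\Sigma}=O(1)$ from the Sherman--Morrison formula already computed there. You have spelled out more of the bookkeeping (the structural fact about the top $d$ edges pairwise sharing 2-changes, the determinant asymptotics, the $d=(n-3)/2$ substitution) than the paper's two-sentence proof does, but the argument is the same.
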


\begin{proof}
    By \Cref{lemma: rewrite to orthant probability}, we need to
    multiply the bound from \Cref{lemma: reduced orthant probability}
    by a factor~$2^{d}\sqrt{\det {\hat\Sigma}}$. It is easily seen from the proof
    of the latter lemma that this determinant is~$O(1)$.
\end{proof}

Finally, we combine all of the above for the last crucial lemma, from
which our main result follows directly.

\begin{lemma}\label{lemma:probability of 2-optimal}
    Let~$G$ be a complete graph on~$n =2^k+1 \geq 5$ vertices, with edge weights
    drawn independently from~$U[0,1]$ for each edge.
    Let~$T$ be any tour through~$G$.
    The probability that~$T$ is 2-optimal is bounded from above by 
    \[
        O \left(\frac{c^n}{\sqrt{(n-2)!}}\right),
    \]
    where~$c = \sqrt{\frac{\pi}{2}} \cdot e^{-\frac{1}{9\pi}} < 1.2098$.
\end{lemma}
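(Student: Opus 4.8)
The plan is to combine \Cref{lemma:2-optimal chord-disjoint}, which bounds $\prob(T \text{ is 2-optimal})$ by $\G(\S)\prod_{e\in T,\,k_e>0}\sqrt{\pi/(2k_e(\S))}$, with the two ingredients we have already assembled: the estimate $\G(\S) = O(e^{-n/(9\pi)})$ from \Cref{lemma: Eg bound}, and the exact multiset of edge-multiplicities $\{k_e(\S)\}_{e\in T} = \{0,1,2,\dots,n-3\}$ from \Cref{lemma:counting edge occurrences}. The only real work is therefore to evaluate the product $\prod_{e\in T,\,k_e>0}\sqrt{\pi/(2k_e(\S))}$ and show it is $O\!\bigl((\pi/2)^{n/2}/\sqrt{(n-2)!}\bigr)$ up to the sub-exponential slack we can afford.

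First I would write, using \Cref{lemma:counting edge occurrences}, that the nonzero values of $k_e(\S)$ are precisely $1,2,\dots,n-3$, so that
\[
    \prod_{\substack{e\in T\\ k_e(\S)>0}} \sqrt{\frac{\pi}{2k_e(\S)}}
    = \left(\frac{\pi}{2}\right)^{(n-3)/2} \frac{1}{\sqrt{(n-3)!}}.
\]
Then \Cref{lemma:2-optimal chord-disjoint} together with \Cref{lemma: Eg bound} gives
\[
    \prob(T \text{ is 2-optimal})
    = O\!\left( e^{-n/(9\pi)} \left(\frac{\pi}{2}\right)^{(n-3)/2}\frac{1}{\sqrt{(n-3)!}} \right).
\]
Absorbing the fixed factor $(\pi/2)^{-3/2}$ into the $O(\cdot)$ and writing $(\pi/2)^{(n-3)/2} = (\pi/2)^{n/2}\cdot(\pi/2)^{-3/2}$, the exponential part becomes $\bigl(\sqrt{\pi/2}\,e^{-1/(9\pi)}\bigr)^n = c^n$ with $c = \sqrt{\pi/2}\cdot e^{-1/(9\pi)}$. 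It remains only to convert $1/\sqrt{(n-3)!}$ into $1/\sqrt{(n-2)!}$: since $\sqrt{(n-2)!} = \sqrt{n-2}\cdot\sqrt{(n-3)!}$, we have $1/\sqrt{(n-3)!} = \sqrt{n-2}/\sqrt{(n-2)!}$, and the stray $\sqrt{n-2}$ factor is polynomial in $n$, hence harmless since we only claim a bound of the form $O(c^n/\sqrt{(n-2)!})$ and $c>1$ (indeed any $c>1$ swallows polynomial factors). A direct numerical check gives $\sqrt{\pi/2}\approx 1.2533$ and $e^{-1/(9\pi)}\approx 0.9650$, so $c \approx 1.2094 < 1.2098$, establishing the stated constant.

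The main subtlety — more bookkeeping than obstacle — is making sure the polynomial slack is handled cleanly: the factor $\sqrt{n-2}$ arising from shifting the factorial, and any polynomial prefactor hidden in the $O(\cdot)$ of \Cref{lemma: Eg bound}, must be absorbed into $c^n$. Since $c > 1$ is a strict inequality and $1.2094 < 1.2098$ leaves a genuine multiplicative gap that grows geometrically, for all sufficiently large $n$ we have $p(n)\cdot 1.2094^n \le 1.2098^n$ for any fixed polynomial $p$, and the finitely many small $n$ (recall $n = 2^k+1 \ge 5$) are covered by the implied constant in $O(\cdot)$. I would state this absorption explicitly rather than leaving it implicit, since the whole point of \Cref{thm:count_2opt} is the numerical value of the base. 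Everything else is a direct substitution of the earlier lemmas.
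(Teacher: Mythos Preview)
Your proposal is correct and follows the paper's proof essentially line for line. One small slip: \Cref{lemma:counting edge occurrences} states that two edges share $k_e = 0$ and two edges share a common positive value, so there are $n-2$ nonzero multiplicities rather than $n-3$ (the product is $\bigl(\tfrac{n-1}{2}-1\bigr)\cdot (n-3)!$, as the paper writes), but this extra factor is polynomial and is absorbed by exactly the argument you already give for $\sqrt{n-2}$.
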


\begin{proof}
    Let~$\mathcal{S}$ be a set of 2-changes on~$G$ constructed as described above, and
    let~$k_e$ be the number of 2-changes in~$\mathcal{S}$ in which~$e \in T$
    participates. 
    From \Cref{lemma:counting edge occurrences}
    we have
    \begin{align*}
            \prod_{\substack{e \in T \\ k_e > 0}} \frac{1}{k_e}
                = \frac{1}{\frac{n-1}{2} - 1} \cdot \frac{1}{(n-3)!} = O\left(\frac{1}{(n-2)!}\right).
    \end{align*}
    Using this result in
    \Cref{lemma:2-optimal chord-disjoint} together with
    \Cref{lemma: Eg bound} yields the claim.
\end{proof}

This leads to our last result (\Cref{thm:count_2opt}), using the fact that
the number of tours on a complete graph is~$\frac{1}{2}(n-1)!$.

\subsection{Numerical Experiment}
\label{sec: numerical}

It seems unlikely that the bound in~\Cref{thm:count_2opt} is tight. A simple numerical
estimate of~$\G(\S)$ already shows that it could be improved to~$O(1.0223^n\sqrt{n!})$,
but even this may be a coarse approximation; after all, in constructing~$\S$, we discard
many 2-changes.

To estimate the number of 2-optimal tours numerically, one could take a na\"ive approach:
simply fix a tour~$T$, generate edge weights from~$U[0,1]$, and check 2-optimality
of~$T$ with these weights. By repeating this experiment we can 
estimate~$\prob(\text{$T$ is 2-optimal})$.
However, since this probability is super-exponentially small (\Cref{lemma:probability of 2-optimal}),
this is rather inefficient.

We can do better by taking a different view of the problem. Fix again an arbitrary tour~$T$.
We write the edge weights as a vector~$w \in [0, 1]^E$. Given~$T$, we can determine
all~$n(n-3)/2$ possible 2-changes on~$T$. Local optimality of~$T$
means that all these 2-changes yield a negative improvement. Thus, for a 2-change
that removes edges~$e_1$ and~$e_2$ and adds~$f_1$ and~$f_2$, this yields an inequality
\begin{align}\label{eq:2-opt ineq}
    w_{e_1} + w_{e_2} - w_{f_1} - w_{f_2} \leq 0.
\end{align}
Each possible 2-change on~$T$ yields such a constraint. Together with the 
constraints~$0 \leq w_e \leq 1$ for each edge, this yields a convex polytope~$P$
embedded in~$\real^m$. Since the edge weights are i.i.d\ uniform
random variables,
\[
    \prob(\text{$T$ is 2-optimal}) = \mathrm{vol}(P),
\]
the~$|E|$-dimensional volume of~$P$.

We remark now that our approach through \Cref{lemma:2-optimal chord-disjoint} is equivalent
to bounding the volume of a polytope that contains~$P$, by discarding some of the
inequalities (\Cref{eq:2-opt ineq}). It may be possible to use methods from convex geometry to
compute better estimates of~$\mathrm{vol}(P)$; we leave this discussion 
to \Cref{sec:counting_discussion}.

Computing the volume of an arbitrary polytope is itself not an easy task: this
problem is \sharpP{}-hard in general, as shown by Dyer
and Frieze~\cite{dyerComplexityComputingVolume1988}. In the same work however, they show
that volume approximation can be done efficiently by randomized algorithms.
We use the open-source software package \emph{Volesti}~\cite{chalkisVolestiVolumeApproximation2021,emirisPracticalPolytopeVolume2018}
to numerically approximate the volume of~$P$. 

For a range of different~$n$,
we compute the inequalities that define~$P$, and output them in a format readable
for Volesti. We use the `Cooling Balls' algorithm of Volesti, with an error parameter
of 0.001 (see the specification of Volesti \cite{chalkisVolestiVolumeApproximation2021}
approximated~$\mathrm{vol}(P)$ up to~$n = 40$. The results of these
computations are shown in \Cref{fig:volume}.

\begin{figure}
    \centering
    \includegraphics[width=0.75\linewidth]{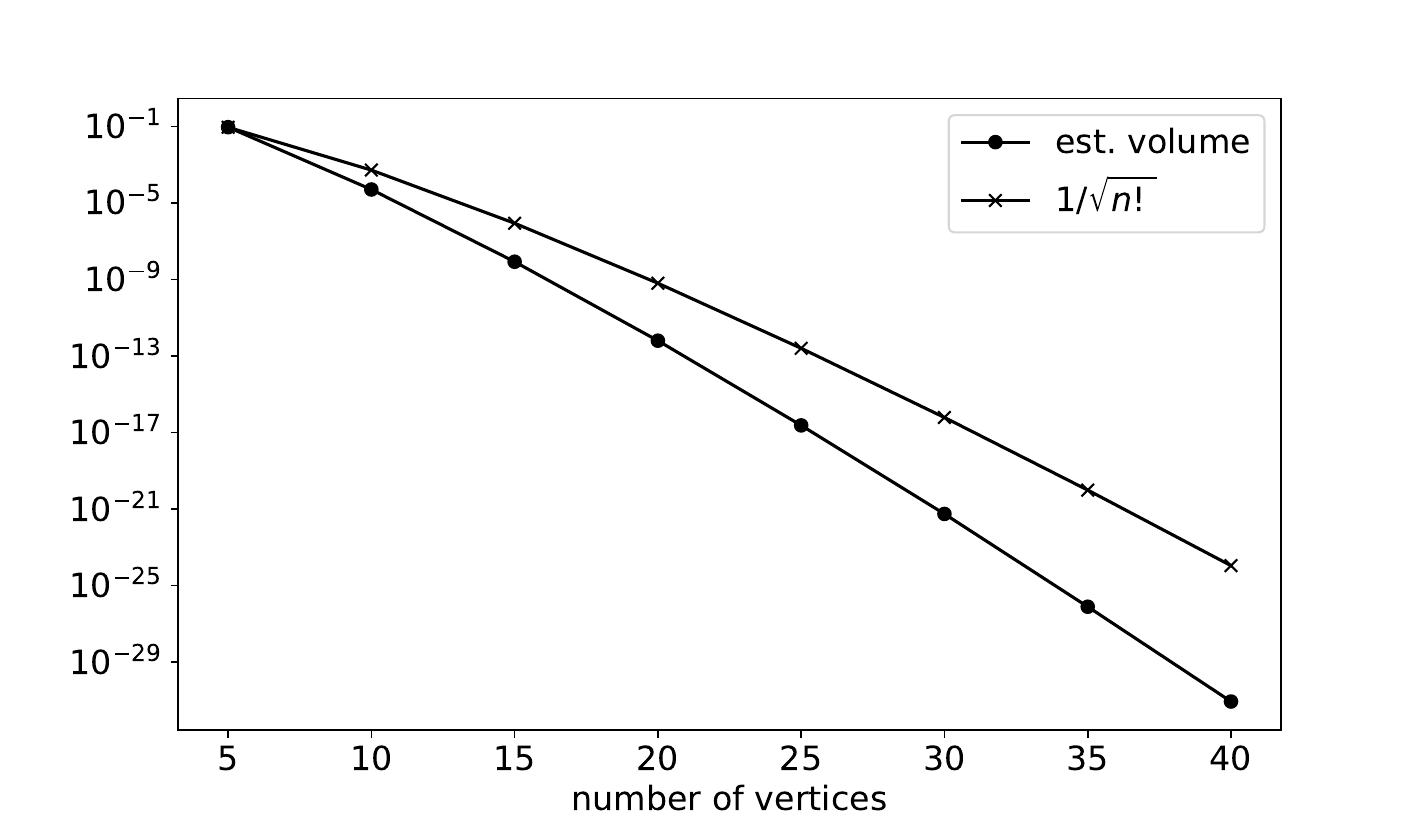}
    \caption{Estimated volume of the 2-opt polytope $P_n$, for different values of $n$, as
    computed by Volesti. For comparison, the function $n \mapsto 1/\sqrt{n!}$ is also
    plotted.}
    \label{fig:volume}
\end{figure}

\section{Discussion}
\label{sec:counting_discussion}

The result we present in \Cref{thm: sharp P complete} is to our knowledge
the first hardness result for counting locally optimal solutions for
a \emph{natural} local search problem. We note that it is easy to show that
counting the local optima is \sharpP{}-hard for \emph{some} local search problem. For instance,
Johnson et al.~\cite{johnsonHowEasyLocal1988} provided a local search problem whose
locally optimal solutions are exactly the satisfying assignments of an instance of
Boolean Satisfiability (plus one extra solution). However, their construction is rather
artificial, whereas 2-opt is a heuristic often used in practice.

\Cref{thm:count_2opt} is to our knowledge the first non-trivial bound on
the number of locally optimal solutions for a local search problem.
It must be noted that we only proved \Cref{thm:count_2opt} for specific values
of~$n$, namely~$n = 2^k + 1$ for~$k \in \mathbb{N}$. This restriction simplifies
mainly the construction of~$\S$, our set of chord-disjoint 2-changes. We believe
that the results can be extended at the cost of some complexity in the proofs.
For the sake of simplicity, we chose not to do so.

The bound in \Cref{thm:count_2opt} shows that in expectation, the number of 2-optimal
tours in a random TSP instance is approximately the square root of the total number of tours.
While still a rather large number, it is nonetheless a super-exponentially small
fraction of the total number of tours. 

\paragraph{Limitations.} 
A natural question concerns the tightness of the bound in \Cref{thm:count_2opt}.
We have not succeeded in constructing a lower bound for the number
of 2-optimal tours. Presumably such a construction would be rather involved,
since there is little structure in this random instance model.
Nonetheless, we can still argue that \Cref{thm:count_2opt} can be improved
significantly from two directions: bounding~$\G(\S)$ and
constructing~$\S$.

\Cref{lemma: Eg bound} is far from optimal. A simple numerical experiment to estimate~$\G(\S)$ yields~$\G(\S) = O(1.226^{-n})$. 
This results in a bound of~$O(1.0223^n \sqrt{n!})$ in \Cref{thm:count_2opt}. We also note that, even though our set~$\S$
achieves essentially the largest number of chord-disjoint 2-changes possible on any tour,
it may not be an optimal choice: different choices may lead to
better sequences of values for~$k_e$.

Using the trivial bound of~$\G(\S) \leq 1$
instead of \Cref{lemma: Eg bound} in \Cref{lemma:probability of 2-optimal}, we would
obtain in \Cref{thm:count_2opt} a bound of approximately~$O(1.2534^n \sqrt{n!})$.
The difference with our bound 
is thus less than a factor of~$1.04^n$, which is a rather minute improvement
for the amount of trouble we went through to obtain it. We therefore regard 
the calculations leading to \Cref{lemma: Eg bound} more as a proof-of-concept that
significant improvements are still possible to obtain
rigorously, and as a demonstration of the effort required
to achieve anything non-trivial.

Another possible direction for improving the bound in \Cref{thm:count_2opt} lies in
polyhedron volume estimation. In \Cref{sec: numerical}, we defined a polytope~$P$
such that the volume of~$P$ is the probability that a tour on~$n$ vertices is 2-optimal.
While computing the volume of a polytope is \sharpP{}-hard in general~\cite{dyerComplexityComputingVolume1988},
it may be possible to obtain an asymptotic formula for our specific case,
as was done by Canfield and McKay for
the well-studied Birkhoff polytope~\cite{canfieldAsymptoticVolumeBirkhoff2009}.

\paragraph{A conjecture.}
In light of these observations, and the estimate resulting from the numerical experiments,
we conjecture the following.

\begin{conjecture}
    Let~$G$ be a complete graph on~$n$ vertices, with edge weights
    drawn independently from~$U[0,1]$ for each edge. Then the expected number of
    2-optimal tours on~$G$ is bounded from above by~$O(\sqrt{n!})$.
\end{conjecture}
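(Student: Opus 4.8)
The plan is to work with the reformulation from \Cref{sec: numerical}: the expected number of $2$-optimal tours equals $\tfrac{1}{2}(n-1)!\cdot\mathrm{vol}(P_n)$, where $P_n\subseteq\real^E$ is the $2$-opt polytope and $\mathrm{vol}(P_n)=\prob(T\text{ is $2$-optimal})$. Since $\sqrt{n!}/(n-1)! = \Theta(1/\sqrt{(n-2)!})$, proving the conjecture amounts to showing $\mathrm{vol}(P_n) = O(1/\sqrt{(n-2)!})$; that is, one must remove the factor $c^n$ from \Cref{lemma:probability of 2-optimal} (equivalently from \Cref{thm:count_2opt}), up to polynomial slack. \Cref{lemma:2-optimal chord-disjoint} already pinpoints where the loss lives: for the set $\S$ built in \Cref{sec: construct S} one has $\prod_{e:k_e>0}\sqrt{\pi/(2k_e)} = \Theta\!\big((\pi/2)^{n/2}/\sqrt{(n-2)!}\big)$, so the target is precisely $\G(\S)=O\!\big((2/\pi)^{n/2}\big)$ up to polynomial factors.

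\textbf{Route A: sharpen $\G(\S)$.} Here $\G(\S)$ is controlled through the positive orthant probability of a multivariate normal whose inverse covariance is diagonally dominant with off-diagonal entries $\Theta(1/n)$. The analysis behind \Cref{lemma: rewrite to orthant probability}, \Cref{lemma: reduced orthant probability} and \Cref{thm: orthant expectation} keeps only the leading $d\times d$ block and replaces every off-diagonal entry by the minimum; instead one could retain the full matrix, track the correlations induced by the spread of the $k_e$, and iterate \Cref{thm: orthant expectation} (bounding $\expect[\|X\|^2\mid\real_+^d]$ recursively rather than via a single use of Amemiya's and Manjunath--Wilhelm's identities), hoping to gain a further constant factor per coordinate. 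I expect this route to fail on its own: the numerical estimate $\G(\S)=O(1.226^{-n})$ is essentially sharp for this $\S$, and $1.226^{-1}\approx 0.816 > (2/\pi)^{1/2}\approx 0.798$, so no refinement of the orthant bound for the matrix coming from our $\S$ can reach the target. A genuinely different family of $2$-changes is needed, but the paper argues $\S$ already attains close to the maximum possible number of chord-disjoint $2$-changes, and abandoning chord-disjointness destroys the conditional independence underlying \Cref{lemma:2-optimal chord-disjoint}; one would then have to replace the product bound by a correlation inequality for the decreasing events $\{\Delta(S)\le 0\}$ (an FKG-type argument), which is a substantial detour.

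\textbf{Route B: direct volume asymptotics.} The more promising approach is to evaluate $\mathrm{vol}(P_n)$ asymptotically in the spirit of Canfield and McKay's treatment of the Birkhoff polytope. One would express $\mathrm{vol}(P_n)$ — or the Gaussian-weight analogue $\prob(\bigwedge_S \Delta(S)\le 0)$, which by the estimates in the proof of \Cref{lemma:2-optimal chord-disjoint} differs from the uniform-weight probability only by controllable factors — as an inverse-Laplace or Fourier integral over the dual cone, and apply the saddle-point method. The main obstacle, and the step I would expect to dominate the work, is that the $n(n-3)/2$ defining inequalities overlap heavily: each weight $w_e$ appears in $\Theta(n)$ of them, so the integrand does not factor, the phase function is governed by the cyclic structure of $T$, and controlling the off-diagonal part of the associated quadratic form near the saddle — equivalently, the second-order behaviour of the constraint matrix — is exactly where a new idea is required.
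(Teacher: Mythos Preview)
The statement you are asked to prove is a \emph{conjecture}: the paper does not prove it, and explicitly leaves it open. There is therefore no ``paper's own proof'' to compare against. What the paper does offer is the discussion in \Cref{sec:counting_discussion}, where it lists the same two avenues you outline --- sharpening the bound on~$\G(\S)$ and attacking~$\mathrm{vol}(P_n)$ directly via Canfield--McKay-style asymptotics --- and flags the same obstacles.

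Your proposal is not a proof either, and you are candid about this: Route~A ends with ``I expect this route to fail on its own'', and Route~B ends with ``exactly where a new idea is required''. So what you have written is a well-informed research plan, not a proof attempt. On the substance of that plan, your diagnostics are sound. The arithmetic in Route~A is correct: with the~$\S$ from \Cref{sec: construct S} the product~$\prod_{e:k_e>0}\sqrt{\pi/(2k_e)}$ carries a factor~$(\pi/2)^{(n-2)/2}$, so the target for~$\G(\S)$ is indeed~$(2/\pi)^{n/2}$ up to polynomial slack, and since~$1/1.226\approx 0.816>\sqrt{2/\pi}\approx 0.798$, even the numerically estimated~$\G(\S)$ for this particular~$\S$ falls short. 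That is a genuine obstruction to Route~A as stated, and it matches the paper's own remark that the numerical~$\G(\S)$ would only give~$O(1.0223^n\sqrt{n!})$. Your suggestion to drop chord-disjointness and replace the product bound by a correlation inequality is reasonable but speculative; likewise Route~B is exactly the direction the paper points to, and you have correctly located the hard step (the heavy overlap among the~$\Theta(n^2)$ constraints).

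In short: there is no gap to name because there is no proof, in your proposal or in the paper. Your write-up is an accurate summary of where the problem stands.
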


The numerical data suggests that the expected number of tours may even be~$c^n \sqrt{n!}$
for some~$c < 1$ (as opposed to~$c > 1$ as in \Cref{thm:count_2opt}), although we could not
perform the numerical experiments for large enough~$n$ to state this with confidence.

\bibliographystyle{abbrv}
\bibliography{bibliography.bib}

\end{document}